\documentclass[a4paper,12pt]{article}
\usepackage{amsfonts,slashed}
\usepackage{url}
\usepackage{latexsym}
\usepackage{amsfonts}
\usepackage{epsfig}
\usepackage{latexsym,amssymb}
\usepackage{amsmath,amssymb,amsthm}
\usepackage{ifpdf}
\usepackage{cite}
\usepackage[pdftex]{hyperref}
\usepackage{tikz}
\setcounter{MaxMatrixCols}{10}
\ifx\pdfoutput\undefined
   \pdffalse
\else
   \pdfoutput=1
   \pdftrue
\pdfcompresslevel=9
\fi

\setlength{\parskip}{0pt} \setlength{\parindent}{0.5cm}
\setcounter{footnote}{0}

\numberwithin{equation}{section}
\def\be{\begin{equation}}
\def\ee{\end{equation}}
\def\ba{\begin{array}}
\def\ea{\end{array}}

\newcommand{\bea}{\begin{eqnarray}}
\newcommand{\eea}{\end{eqnarray}}

\textwidth 165mm \textheight 220mm \topmargin 0pt \oddsidemargin 2mm

\def\ii{{\rm i}}

\newcommand{\bbox}{\lower.2ex\hbox{$\Box$}}

\newtheorem{teo}{Theorem}
\def\bfone{\relax{\rm 1\kern-.35em 1}}
\def\bfzero{\relax{\rm 0\kern -.45 em 0}}
\usetikzlibrary{trees}
\usetikzlibrary{positioning,shadows,arrows,decorations.pathmorphing,decorations.markings}
\tikzstyle{block}=[draw opacity=0.7,line width=1.4cm]

\begin{document}

\title{An Analytic Method for $S$-Expansion involving Resonance and Reduction}
\author{M. C. Ipinza$^{1,2,3}$\thanks{%
marcalderon@udec.cl}, F. Lingua$^{2}$\thanks{%
fabio.lingua@polito.it}, D. M. Pe\~{n}afiel$^{1,2,3}$\thanks{%
diegomolina@udec.cl}, L. Ravera$^{2,3}$\thanks{%
lucrezia.ravera@polito.it} \\
{\small $^{1}$\textit{Departamento de F\'{\i}sica, Universidad de Concepci%
\'{o}n,}} \\
{\small Casilla 160-C, Concepci\'{o}n, Chile}\\
{\small $^{2}$\textit{DISAT, Politecnico di Torino}}\\
{\small Corso Duca degli Abruzzi 24, I-10129 Torino, Italia}\\
{\small $^{3}$\textit{Istituto Nazionale di Fisica Nucleare (INFN)}}\\
{\small Sezione di Torino, Via Pietro Giuria 1, 10125, Torino, Italia}}
\maketitle

\vskip 1 cm

\begin{center}
{\small \textbf{Abstract} }
\end{center}

In this paper we describe an \textit{analytic method} able to give the multiplication table(s) of the set(s) involved in an $S$-expansion process (with either resonance or $0_S$-resonant-reduction) for reaching a target Lie (super)algebra from a starting one, after having properly chosen the partitions over subspaces of the considered (super)algebras.

This analytic method gives us a simple set of expressions to find the subset decomposition of the set(s) involved in the process. Then, we use the information coming from both the initial (super)algebra and the target one for reaching the multiplication table(s) of the mentioned set(s). Finally, we check associativity with an auxiliary computational algorithm, in order to understand whether the obtained set(s) can describe semigroup(s) or just abelian set(s) connecting two (super)algebras.

We also give some interesting examples of application, which check and corroborate our analytic procedure and also generalize some result already presented in the literature.

\vskip 1 cm \eject
\numberwithin{equation}{section}


\section{Introduction}

The relation of given Lie (super)algebras among themselves, and in particular the derivation of new (super)algebras from other ones, is a problem of great interest, in both Mathematics and Physics, since it involves the problem of mixing (super)algebras, which is a non-trivial way of enlarging spacetime symmetries. 

One method to connect different (super)algebras is the \textit{expansion} procedure, introduced for the first time in \cite{Hatsuda}, and subsequently studied under different scenarios in \cite{Azca1,Azca2,Azca3}. 
In $2006$, a natural outgrowth of the power series expansion method was proposed (see Ref.s \cite{Iza1,Iza2,Iza3}), which is based on combining the structure
constants of the initial (super)algebra with the inner multiplication law of a discrete set $S$ with the structure of a semigroup, in order to define the Lie bracket of a new $S$-expanded (super)algebra. 
From the physical point of view, several (super)gravity theories have been extensively studied using the $S$-expansion approach, enabling numerous results over recent years (see Ref.s \cite{Iza4, GRCS, CPRS1, Topgrav, BDgrav, CR2, CRSnew, Static, Gen, Ein, Fierro1, Fierro2, Knodrashuk, Artebani, Concha1, Salgado, Concha2, Caroca:2010ax, Caroca:2010kr, Caroca:2011zz, Andrianopoli:2013ooa, Concha:2016hbt, Concha:2016kdz, Concha:2016tms, Durka:2016eun}) in this context.

The \textit{$S$-expansion} method replicates through the elements of a semigroup the structure of the original (super)algebra into a new one. 
The basis of the $S$-expansion consists, in fact, on combining the multiplication law of a semigroup $S$ with the structure constants of a Lie (super)algebra $\mathcal{G}$
\cite{Iza1}; The new Lie (super)algebra obtained through this procedure is called \textit{$S$-expanded (super)algebra}, and it is written as $\mathcal{G}_S= S \otimes \mathcal{G}$. 

There are two facets applicable in the $S$-expansion method, which offer great manipulation on (super)algebras, \textit{i.e.} \textit{resonance} and \textit{reduction}. The role of \textit{resonance} is that of transferring the structure of the semigroup to the target (super)algebra, and therefore to control its structure with a suitable choice on the semigroup decomposition. Meanwhile, \textit{reduction} plays a peculiar role in cutting the (super)algebra properly, thanks to the existence of a zero element in the set involved in the procedure, 
which allows, for example, the In\"on\"u-Wigner contraction (see Ref.s \cite{Inonu1, Inonu2}).

A fundamental task to accomplish in the $S$-expansion is to find the appropriate semigroup connecting two different (super)algebras, but this task involves a non-trivial process, due to the fact that until today there is no analytic procedure to unequivocally derive the semigroup performing the required expansion. 

With this in  mind, in the present work we describe an \textit{analytic method} to find the correct semigroup(s) allowing $S$-expansion (involving either resonance or $0_S$-resonant-reduction) between two different (super)algebras, once the partitions over subspaces have been properly chosen. 

This work is organized as follows: In Section \ref{Review}, we give a review of $S$-expansion, reduction, $0_S$-reduction (and $0_S$-resonant-reduction), and resonance.
In Section \ref{Method}, we develop an analytic procedure to obtain the semigroup(s) multiplication table(s) linking different Lie (super)algebras. 
Then, examples of application are presented in Section \ref{Examples}. 
Section \ref{Comments} contains a summary of our results, with comments and possible developments.
In the Appendix, we give the detailed calculations for reaching the results we have obtained.

\section{Review of $S$-expansion, reduction, $0_S$-reduction (and $0_S$-resonant-reduction), and resonance}\label{Review}

The expansion of a Lie (super)algebra entails finding a new (super)algebra starting from an original one. The so called \textit{$S$-expansion}, that is an incarnation of the \textit{expansion method} described in \cite{Azca1}, involves a finite abelian semigroup $S$ to accomplish this task, and it has the feature of being very simple and direct (see Ref. \cite{Iza1}). 
The $S$-expansion method allows to obtain new Lie (super)algebras starting from an original one by choosing an abelian semigroup leading to \textit{resonant}, \textit{reduced} or \textit{resonant-reduced} subalgebras.

\subsection{$S$-expansion of Lie (super)algebras}

The $S$-expansion procedure consists in combining the structure constants of a Lie (super)algebra $\mathcal{G}$ with the inner multiplication law of a semigroup $S$, to define the Lie bracket of a new, $S$-expanded (super)algebra $\mathcal{G}_S= S \otimes \mathcal{G}$. 

Let $S= \lbrace \lambda_\alpha \rbrace$, with $\alpha=1,...,N$, be a finite, abelian semigroup with \textit{two-selector} $K_{\alpha \beta}^{\;\;\;\; \gamma}$ defined by
\begin{equation}\label{kseldef}
K_{\alpha \beta}^{\;\;\;\; \gamma} = \left\{ \begin{aligned} &
1 , \;\;\;\;\; \text{when} \; \lambda_\alpha \lambda_\beta = \lambda_\gamma,
\\ & 0 , \;\;\;\;\; \text{otherwise}. \end{aligned} 
\right.
\end{equation}
Let $\mathcal{G}$ be a Lie (super)algebra with basis $\lbrace T_A \rbrace$ and structure constants $C_{AB}^{\;\;\;\;C}$, defined by the commutation relations 
\begin{equation}
\left[T_A, T_B \right]= C_{AB}^{\;\;\;\;C}\; T_C .
\end{equation}
Denote a basis element of the direct product $S\otimes \mathcal{G}$ by $T_{(A,\alpha)}= \lambda_\alpha T_A$ and consider the induced commutator
\begin{equation}
\left[ T_{(A,\alpha)},T_{(B,\beta)}\right] \equiv \lambda_\alpha \lambda_\beta \left[T_A,T_B \right].
\end{equation}
Then one can show (see Ref. \cite{Iza1}) that the product
\begin{equation}\label{prodsexp}
\mathcal{G}_S= S \otimes \mathcal{G}
\end{equation}
corresponds to the Lie (super)algebra given by
\begin{equation}\label{expandedone}
\left[T_{(A,\alpha)},T_{(B,\beta)}\right]= K_{\alpha \beta}^{\;\;\;\; \gamma} C_{AB}^{\;\;\;\;C}\; T_{(C,\gamma)},
\end{equation}
whose structure constants can be written as
\begin{equation}\label{strconstant}
C_{(A,\alpha)(B,\beta)}^{\;\;\;\;\;\;\;\;\;\;\;\;\;\;\;\;(C,\gamma)}= K_{\alpha \beta}^{\;\;\;\gamma}C_{AB}^{\;\;\;\;C}.
\end{equation}
The product $\left[\cdot,\cdot\right]$ defined in (\ref{expandedone}) is also a Lie product, since it is linear, antisymmetric and it satisfies the Jacobi identity. This product defines a new Lie (super)algebra characterized by $(\mathcal{G}_S,\left[\cdot,\cdot\right])$, which is called \textit{$S$-expanded Lie (super)algebra}.
This implies that, for every abelian semigroup $S$ and Lie (super)algebra $\mathcal{G}$, the (super)algebra $\mathcal{G}_S$ obtained through the product (\ref{prodsexp}) is also a Lie (super)algebra, with a Lie bracket given by (\ref{expandedone}) \footnote{However, as we will show in the present work, there exist some exception in which, in order to reach a target Lie (super)algebra, is not always necessary to use a semigroup, but just an abelian set, since the procedure can be performed without requiring associativity. This is due to the fact that, in that cases, the Jacobi identity is trivially satisfied (each term of the Jacobi identity is equal to zero).}.  

\subsection{Reduced Lie (super)algebras}

In \cite{Iza1}, the authors gave a definition in order to introduce the concept of \textit{reduction} of Lie (super)algebras. It essentially reads as follow:
Let us consider a Lie (super)algebra $\mathcal{G}$ of the form $\mathcal{G}=V_0 \oplus V_1$, where $V_0$ and $V_1$ are two subspaces respectively given by  $V_0=\lbrace T_{a_0} \rbrace$ and $V_1=\lbrace T_{a_1} \rbrace$. When $\left[V_0,V_1 \right]\subset V_1$, that is to say when the commutation relations between generator present the following form
\begin{eqnarray}
\left[T_{a_0},T_{b_0}\right]&=& C_{a_0 b_0}^{\;\;\;\;\;c_0} T_{c_0} + C_{a_0b_0}^{\;\;\;\;\;c_1}T_{c_1}, \label{commreduced1}\\ 
\left[T_{a_0},T_{b_1}\right]&=& C_{a_0 b_1}^{\;\;\;\;\;c_1} T_{c_1}, \label{commreduced2}\\ 
\left[T_{a_1},T_{b_1}\right]&=& C_{a_1b_1}^{\;\;\;\;\;c_0}T_{c_0}+C_{a_1b_1}^{\;\;\;\;\;c_1}T_{c_1}, \label{commreduced3}
\end{eqnarray}
one can show that the structure constants $C_{a_0b_0}^{\;\;\;\;\;c_0}$ satisfy the Jacobi identity themselves, and therefore 
\begin{equation}
\left[T_{a_0},T_{b_0}\right]= C_{a_0b_0}^{\;\;\;\;\;c_0}T_{c_0} 
\end{equation}
itself corresponds to a Lie (super)algebra, which is called \textit{reduced} (super)algebra of $\mathcal{G}$.

In spite of the similarity of the concepts, a reduced algebra does \textit{not}, in general, correspond to a subalgebra (see Ref. \cite{Iza1}).

\subsection{$0_S$-reduction (and $0_S$-resonant-reduction) of $S$-expanded Lie (super)algebras}

The concept of \textit{reduction} of Lie (super)algebras, and in particular \textit{$0_S$-reduction}, was introduced in \cite{Iza1}. It involves the extraction of a smaller (super)algebra from a
given Lie (super)algebra $\mathcal{G}_S$, when certain conditions are met. 

Now, in order to give a review of \textit{$0_S$-reduction}, let us consider an abelian semigroup $S$ and the $S$-expanded (super)algebra $\mathcal{G}_S= S \otimes \mathcal{G}$. When the semigroup $S$ has a \textit{zero element} $0_S \in S$ (in the following, we will adopt the notation $0_S\equiv \lambda_{0_S}$, in order to make clearer the multiplication rules of the semigroup(s) involved in the process), this element plays a peculiar role in the $S$-expanded (super)algebra, as it was shown in \cite{Iza1}. 
In fact, we can split the semigroup $S$ into non-zero elements $\lambda_{i}$, $i=0,...,N$, and a zero element $\lambda_{N+1}=0_S= \lambda_{0_S}$. 
The zero element $\lambda_{0_S}$ is defined as one for which
\begin{equation}
\lambda_{0_S} \lambda_\alpha = \lambda_\alpha \lambda_{0_S} = \lambda_{0_S},
\end{equation}
for each $\lambda_\alpha \in S$.
Under this assumption, we can write $S = \lbrace \lambda_{i}\rbrace \cup \lbrace\lambda_{N+1}=\lambda_{0_S}\rbrace$, with $i = 1, ... ,N$ (here and in the following, the Latin index run only on the non-zero elements of the semigroup $\tilde{S}$). 
Then, the two-selector satisfies the relations
\begin{eqnarray}
K_{i,N+1}^{\;\;\;\;\;\;\;\;\;\; j} &=& K_{N+1,i}^{\;\;\;\;\;\;\;\;\;\; j}=0, \\
K_{i,N+1}^{\;\;\;\;\;\;\;\;\;\; N+1} &=& K_{N+1,i}^{\;\;\;\;\;\;\;\;\;\; N+1}=1, \\
K_{N+1,N+1}^{\;\;\;\;\;\;\;\;\;\;\;\;\;\;\;\;j} &=& 0, \\
K_{N+1,N+1}^{\;\;\;\;\;\;\;\;\;\;\;\;\;\;\;\;N+1} &=& 1,
\end{eqnarray}
which mean, when translated into multiplication rules,
\begin{eqnarray}
\lambda_{N+1}\lambda_i &=& \lambda_{N+1}, \\
\lambda_{N+1}\lambda_{N+1} &=& \lambda_{N+1}.
\end{eqnarray}
Therefore, for $\mathcal{G}_S=S \otimes \mathcal{G}$ we can write the commutation relations
\begin{eqnarray}
\left[T_{(A,i)},T_{(B,j)}\right]&=&K_{ij}^{\;\;\;k}C_{AB}^{\;\;\;\;C}T_{(C,k)} + K_{ij}^{\;\;\; N+1}C_{AB}^{\;\;\;\;C}T_{(C,N+1)}, \\
\left[T_{(A,N+1)},T_{(B,j)}\right]&=&C_{AB}^{\;\;\;\;C}T_{(C,N+1)}, \\
\left[T_{(A,N+1)},T_{(B,N+1)}\right]&=&C_{AB}^{\;\;\;\;C}T_{(C,N+1)}.
\end{eqnarray}
If we now compare these commutation relations with (\ref{commreduced1}), (\ref{commreduced2}), and (\ref{commreduced3}), we clearly see that
\begin{equation}\label{eqredalg}
\left[T_{(A,i)},T_{(B,j)}\right]= K_{ij}^{\;\;\;k}C_{AB}^{\;\;\;\;C}T_{(C,k)}
\end{equation}
are the commutation relations of a \textit{reduced} Lie (super)algebra generated by $\lbrace T_{(A,i)} \rbrace$, whose structure constants are $K_{ij}^{\;\;\;\;k}C_{AB}^{\;\;\;\;C}$. 

The reduction procedure, in this particular case, is equivalent to the imposition of the condition
\begin{equation}
T_{A,N+1}=\lambda_{0_S} T_A = 0.
\end{equation}
We can notice that, in this case, the reduction abelianizes large sectors of the (super)algebra, and that for each $j$ satisfying $K_{ij}^{\;\;\;N+1}=1$ (that is to say $\lambda_{0_S} \lambda_j= \lambda_{N+1}$), we have 
\begin{equation}
\left[ T_{(A,i)},T_{(B,j)}\right]=0.
\end{equation}

The above considerations led the authors of \cite{Iza1} to a definition which essentially reads:
Let $S$ be an abelian semigroup with a zero element $\lambda_{0_S} \in S$, and let $\mathcal{G}_S = S \otimes \mathcal{G}$ be an $S$-expanded (super)algebra. Then, the (super)algebra obtained by imposing the condition
\begin{equation}
\lambda_{0_S} T_A =0
\end{equation}
on $\mathcal{G}_S$ (or on a subalgebra of it) is called \textit{$0_S$-reduced (super)algebra} of $\mathcal{G}_S$ (or of the subalgebra).

When the $0_S$-reduced (super)algebra $\mathcal{G}_{S_{R}}$ presents a structure which is \textit{resonant} with respect to the structure of the semigroup involved in the $S$-expansion process, the procedure takes the name of \textit{$0_S$-resonant-reduction}.



\subsection{Resonant subalgebras for a semigroup}

As we have seen, the $S$-expanded (super)algebra has a fairly simple structure. Furthermore, with the reduction procedure we can arrive to a more interesting (super)algebra, where it is possible to demand some abelian commutators. 

Additionally, there is another way to get smaller (super)algebras from $S \otimes \mathcal{G}$, which strongly depends on the structure of semigroup, that we shall see below.

Let $\mathcal{G}=\bigoplus_{p\in I}V_p$ be a decomposition of $\mathcal{G}$ in subspaces $V_p$, where $I$ is a set of indices. For each $p, q \in I$ it is always possible to define the subsets $i_{(p,q)} \subset I$, such that 
\begin{equation}\label{decomposition}
\left[V_p,V_q\right]\subset \bigoplus_{r\in i_{(p,q)}} V_r,
\end{equation}
where the subsets $i_{(p,q)}$ store the information on the subspace structure of $\mathcal{G}$.

Now, let $S=\bigcup_{p\in I} S_p$ be a subset decomposition of the abelian semigroup $S$, such that
\begin{equation}\label{groupdecomposition}
S_p\cdot S_q \subset \bigcup_{r \in i_{(p,q)}} S_r,
\end{equation}
where the product $S_p \cdot S_q$ is defined as
\begin{equation}
S_p \cdot S_q = \lbrace \lambda_\gamma \mid \lambda_\gamma= \lambda_{\alpha_p}\lambda_{\alpha_q}, \; \text{with} \; \lambda_{\alpha_p}\in S_p, \lambda_{\alpha_q}\in S_q \rbrace \subset S.
\end{equation} 
When such subset decomposition $S =\bigcup_{p\in I} S_p$ exists, then we say that this decomposition is in \textit{resonance} with the subspace decomposition of $\mathcal{G}$, $\mathcal{G}= \bigoplus_{p\in I}V_p$. 

The resonant subset decomposition is crucial in order to systematically extract subalgebras from the $S$-expanded (super)algebra $\mathcal{G}_S = S \otimes \mathcal{G}$, as it was enunciated and proven with the following theorem in Ref. \cite{Iza1} \footnote{This theorem corresponds to ``Theorem IV.2" given in Ref. \cite{Iza1}.}:
\begin{teo}\label{Tres} 
Let $\mathcal{G} = \bigcup_{p\in I} V_p$ be a subspace decomposition of $\mathcal{G}$, with a structure described by equation (\ref{decomposition}), and let $S=\bigcup_{p\in I} S_p$ be a resonant subset decomposition of the abelian semigroup $S$, with the structure given in equation (\ref{groupdecomposition}). Define the subspaces of $\mathcal{G}_S = S \otimes \mathcal{G}$ as
\begin{equation}
W_p=S_p\otimes V_p, \;\;\; p\in I.
\end{equation}
Then, 
\begin{equation}
\mathcal{G}_R=\bigoplus_{p\in I}W_p
\end{equation}
is a subalgebra of $\mathcal{G}_S=S\otimes \mathcal{G}$, called \textit{resonant subalgebra} of $\mathcal{G}_S$.
\end{teo}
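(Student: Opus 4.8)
The goal is to show that $\mathcal{G}_R = \bigoplus_{p\in I} W_p$, with $W_p = S_p \otimes V_p$, is closed under the $S$-expanded Lie bracket, and hence a subalgebra. The plan is to take two arbitrary generators lying in the summands $W_p$ and $W_q$ and verify that their bracket lands back in $\bigoplus_{p\in I} W_p$. Concretely, I would pick $T_{(A,\alpha_p)} \in W_p$ and $T_{(B,\alpha_q)} \in W_q$, meaning $T_A \in V_p$, $T_B \in V_q$, $\lambda_{\alpha_p} \in S_p$ and $\lambda_{\alpha_q} \in S_q$. By the expansion rule (\ref{expandedone}), their bracket is $K_{\alpha_p \alpha_q}^{\;\;\;\;\gamma} C_{AB}^{\;\;\;\;C}\, T_{(C,\gamma)}$, and the entire argument reduces to controlling the two factors in this product separately.

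The key observation is that the algebra index and the semigroup index decouple, and that both the subspace structure (\ref{decomposition}) and the resonance condition (\ref{groupdecomposition}) are indexed by the \emph{same} set $i_{(p,q)}$. First I would use (\ref{decomposition}): since $T_A \in V_p$ and $T_B \in V_q$, the structure-constant factor forces $T_C \in \bigoplus_{r \in i_{(p,q)}} V_r$, so $C_{AB}^{\;\;\;\;C}$ can be nonzero only when the index $C$ corresponds to some $V_r$ with $r \in i_{(p,q)}$. Simultaneously, I would use (\ref{groupdecomposition}): since $\lambda_{\alpha_p} \in S_p$ and $\lambda_{\alpha_q} \in S_q$, the product $\lambda_{\alpha_p}\lambda_{\alpha_q}$ lies in $\bigcup_{r \in i_{(p,q)}} S_r$, so the two-selector $K_{\alpha_p \alpha_q}^{\;\;\;\;\gamma}$ can be nonzero only when $\lambda_\gamma \in S_r$ for some $r \in i_{(p,q)}$.

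The crucial step is to combine these two constraints. The product $K_{\alpha_p \alpha_q}^{\;\;\;\;\gamma} C_{AB}^{\;\;\;\;C}$ is nonzero only when \emph{both} selectors are nonzero, i.e.\ only when there is a common index $r \in i_{(p,q)}$ with $T_C \in V_r$ and $\lambda_\gamma \in S_r$. For any such surviving term, the resulting generator $T_{(C,\gamma)}$ satisfies $T_C \in V_r$ and $\lambda_\gamma \in S_r$ with the \emph{same} $r$, which is precisely the statement that $T_{(C,\gamma)} \in S_r \otimes V_r = W_r \subset \mathcal{G}_R$. Summing over the allowed $r \in i_{(p,q)}$ then gives $[W_p, W_q] \subset \bigoplus_{r \in i_{(p,q)}} W_r \subset \mathcal{G}_R$, establishing closure.

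I expect the main subtlety — rather than a genuine obstacle — to be bookkeeping the matching of the two index sets: one must verify that it is the \emph{same} $i_{(p,q)}$ appearing in both (\ref{decomposition}) and (\ref{groupdecomposition}), so that the algebra constraint and the semigroup constraint reinforce one another on a common $r$ rather than on unrelated indices. Once this alignment is made explicit, closure is immediate. Finally, since $\mathcal{G}_R$ is a linear subspace of $\mathcal{G}_S$ that is closed under a bracket already known to be a Lie (super)bracket (linear, antisymmetric, satisfying Jacobi by the discussion following (\ref{strconstant})), no further verification of the algebra axioms is needed, and $\mathcal{G}_R$ is a subalgebra.
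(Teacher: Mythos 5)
There is a genuine gap, and it sits exactly at the step you dismiss as mere bookkeeping. From $C_{AB}^{\;\;\;\;C}\neq 0$ you may conclude $T_C\in V_s$ for \emph{some} $s\in i_{(p,q)}$, and from $K_{\alpha_p \alpha_q}^{\;\;\;\;\gamma}\neq 0$ (i.e.\ $\lambda_\gamma=\lambda_{\alpha_p}\lambda_{\alpha_q}$) together with (\ref{groupdecomposition}) as printed that $\lambda_\gamma\in S_r$ for \emph{some} $r\in i_{(p,q)}$; but these are two independent existential statements, and nothing forces $r=s$. The product $\lambda_{\alpha_p}\lambda_{\alpha_q}$ is a \emph{single} element multiplying the entire linear combination $C_{AB}^{\;\;\;\;C}\,T_C$, so when $[T_A,T_B]$ has nonzero components in two different subspaces, say $V_0$ and $V_2$ with $i_{(p,q)}=\lbrace 0,2\rbrace$, the union-form condition allows $\lambda_\gamma\in S_0$ with $\lambda_\gamma\notin S_2$; the $V_2$-component $\lambda_\gamma\otimes Z_2$ then lies outside $S_2\otimes V_2$, and hence outside $\mathcal{G}_R$ altogether, since $\mathcal{G}_R\cap\left(S\otimes V_2\right)=S_2\otimes V_2$. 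Closure fails, so your claim that joint nonvanishing of the two selectors produces a \emph{common} index $r$ is a non sequitur. This situation is not hypothetical: in the $osp(32/1)$ example of Appendix \ref{osp}, $\lbrace Q,Q\rbrace$ has components in both $V_0$ and $V_2$, which is precisely why the authors are forced to set $\lambda_b=\lambda_e$ in equations (\ref{lastcomm}) and (\ref{equal}), i.e.\ to place the product inside $S_{2_0}\cap S_{1_2}$; it is also the content of Theorem \ref{T}.

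The repair is that the resonance condition must be taken in its original intersection form, $S_p\cdot S_q\subset \bigcap_{r\in i_{(p,q)}}S_r$, which is how it is stated in Ref.\ \cite{Iza1} — where the actual proof of Theorem \ref{Tres} lives; note that the present paper does not prove the theorem but defers to that reference, and that the $\bigcup$ in its equation (\ref{groupdecomposition}) is evidently a misprint, since the paper's own appendix writes subset products such as $S_{2_1}\cdot S_{2_1}\subset\left(S_{2_0}\cap S_{1_2}\right)\cup\lbrace\lambda_{0_S}\rbrace$ with an intersection. Under the intersection hypothesis your argument goes through and is in fact simpler than you present it: $\lambda_\gamma$ lies in $S_s$ for \emph{every} $s\in i_{(p,q)}$ simultaneously, in particular for each $s$ into which $C_{AB}^{\;\;\;\;C}$ points, so every surviving term $\lambda_\gamma T_C$ lies in $S_s\otimes V_s=W_s$, giving $[W_p,W_q]\subset\bigoplus_{r\in i_{(p,q)}}W_r\subset\mathcal{G}_R$ (your closing remark that the bracket axioms are inherited from $\mathcal{G}_S$ is fine). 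The required simultaneity must be supplied by the hypothesis; it cannot be extracted from the selectors, and with the union-form hypothesis the statement you set out to prove is false in general.
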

The proof of Theorem \ref{Tres} can be found in Ref. \cite{Iza1}. 

\section{Theoretical construction of the analytic method for finding the semigroup(s)}\label{Method}

In this section, we develop an \textit{analytic method} to find the semigroup(s)
involved in the $S$-expansion procedure (with either resonance or $0_S$-resonance-reduction) for moving from an initial Lie (super)algebra to a target one, once the partitions over subspaces of the considered (super)algebras have been properly chosen. 

To this aim, let us consider a finite Lie (super)algebra $\mathcal{G}$, which can be decomposed into $N$ subspaces $V_{A}$, with $A=0,1,...,N-1$, and can be written as their direct sum, namely $\mathcal{G}=\bigoplus _{A}V_{A}$. Then, let us consider a target Lie (super)algebra $\mathcal{G}_{S_{RR}}$ (where the label ``$S_{RR}$" stands for ``$S$-expanded, ($0_S$-)resonant-reduced"), which can analogously be decomposed into $N$ subspaces $\tilde{V}_{A}$, with $A=0,1,...,N-1$, and can be written as their direct sum, namely $\mathcal{G}_{S_{RR}}=\bigoplus _{A}\tilde{V}_{A}$ \footnote{Here and in the following, the quantities with a ``tilde" symbol above will refer to quantities of the target (super)algebra.}. 

Let us also consider an abelian, discrete and finite set $\tilde{S}$, with $P$ elements, including the zero element $\lambda_{0_S}$, which can be decomposed into $N$ subsets $S_A$, $A=0,1,...,N-1$. 

We will denote each of this subsets with $S_{\Delta_A}$, where the composed index $\Delta_A$ expresses both the cardinality (number of elements) of each subsets (capital Greek index, $\Delta$), and the subspace associated (capital Latin index, $A,B,C,...$). The association between the subsets and the (super)algebra subspaces is unique (under the \textit{resonance} condition), and we will see that for each value of $A$ we will have a unique value for the corresponding index $\Delta$. This is the reason why we are using this composite index. 

Thus, let us consider the decomposition of the set $\tilde{S}$ in terms of its subsets:
\begin{equation}
\tilde{S}=\sqcup _{\Delta_A } S_{\Delta_A },
\end{equation}
where with the symbol $\sqcup $ we mean the disjoint union of sets. 

We can now use this general decomposition and perform a \textit{$0_S$-resonant-reduced} process \footnote{A process which involves only resonance would be a simpler one, and it will be briefly treated in the following.}, linking the original Lie (super)algebra $\mathcal{G}$ and the target one $\mathcal{G}_{S_{RR}}$. In this way, we get
\begin{equation}\label{resredG01}
\begin{aligned}
\mathcal{G}_{S_{RR}}& = \tilde{V}_0 \oplus  \tilde{V}_1 \oplus \cdots \oplus \tilde{V}_{N-1} = \\ 
& \\
& = \left(S_{\Delta_0}\otimes V_0\right)\oplus \left(\lbrace{\lambda_{0_S}\rbrace}\otimes V_0\right)\oplus \\
& \\
& \;\;\;\; \oplus \left(S_{\Delta_1}\otimes V_1 \right)\oplus \left(\lbrace{\lambda_{0_S}\rbrace}\otimes V_1\right)\oplus \\
& \\
& \;\;\;\; \oplus \cdots \oplus \left(S_{\Delta_{N-1}}\otimes V_{N-1} \right)\oplus \left(\lbrace{\lambda_{0_S}\rbrace}\otimes V_{N-1}\right),
\end{aligned}
\end{equation}
Since we can factorize the zero element, the above relation can be simply rewritten as
\begin{equation}\label{resredG01simple}
\begin{aligned}
\mathcal{G}_{S_{RR}}& = \tilde{V}_0 \oplus  \tilde{V}_1 \oplus \cdots \oplus \tilde{V}_{N-1} = \\ 
& \\
& = \left[\left(S_{\Delta_0}\otimes V_0\right)\oplus \left(S_{\Delta_1}\otimes V_1\right)\oplus \cdots \oplus \left(S_{\Delta_{N-1}}\otimes V_{N-1}\right)\right] \oplus \left(\lbrace\lambda_{0_S}\rbrace \otimes \mathcal{G} \right).
\end{aligned}
\end{equation}
As we have said above, equation (\ref{resredG01simple}) comes from the study of a $0_S$-resonant-reduced process, which we can be written in a more formal way as
\begin{eqnarray}\label{resredG}
\mathcal{G}_{S_{RR}} &=& \tilde{V}_0 \oplus  \tilde{V}_1 \oplus \cdots \oplus \tilde{V}_{N-1} = \\ \nonumber
&& \\ \nonumber
&= &\left[ \tilde{S}\ominus\left( \sqcup _{\Delta _{A\neq 0}}S_{\Delta
_{A}} \oplus \lambda_{0_S}\right) \right] \otimes V_{0} \oplus \\  \nonumber
&& \\  \nonumber
&&\oplus \left[ \tilde{S}\ominus\left( \sqcup _{\Delta _{A\neq 1}} S_{\Delta
_{A}} \oplus \lambda_{0_S}\right) \right] \otimes V_{1}\oplus \\  \nonumber
&& \\  \nonumber
&&\oplus \cdots \oplus \left[ \tilde{S}\ominus\left( \sqcup _{\Delta _{A\neq N-1}} S_{\Delta
_{A}} \oplus \lambda_{0_{S}}\right) \right]\otimes  V_{N-1} = \\  \nonumber
&& \\  \nonumber
&=&\overset{N-1}{\underset{T=0}{\bigoplus }}\left[ \tilde{S}\ominus\left( \sqcup _{\Delta
_{A\neq T}}S_{\Delta _{A}}\oplus \lambda_{0_S}\right) \right] \otimes
V_{T},
\end{eqnarray}
where we have denoted with $\oplus$ and $\ominus$ the direct sum and subtraction over subsets, respectively.

From expression (\ref{resredG}), taking into account the dimensions of the subspaces involved in the partitions of the considered (super)algebras, the following system of equations arises:
\begin{equation}
\left\{
\begin{aligned}
& \dim \left( \tilde{V}_{0}\right)=\dim \left( V_{0}\right) \left( 
\tilde{P}-1-\sum_{A \neq 0}^{N-1}\Delta _{A}\right),  \\
& \dim \left( \tilde{V}_{1}\right) =\dim \left( V_{1}\right) \left( 
\tilde{P}-1-\sum_{A \neq 1}^{N-1}\Delta _{A}\right),  \\
& \;\;\;\;\;\;\;\;\;\;\;\;\;\;\;\;\;\;\;\; \vdots  \\
& \dim \left( \tilde{V}_{N-1}\right)  =\dim \left( V_{N-1}\right) \left( 
\tilde{P}-1-\sum_{A \neq N-1}^{N-1}\Delta _{A}\right),  \\
& \tilde{P} =\sum_{A}^{N-1}\Delta _{A}+1 , 
\end{aligned} \right.
\end{equation}
where in the expression $\tilde{P} =\sum_{A}^{N-1}\Delta _{A}+1$ we have $\tilde{P} \geq P$ (let us remember that $P$ is the total number of elements of the set $\tilde{S}$), and the $+1$ contribution is given by the presence of the zero element $\lambda_{0_S}$. 

We can rewrite the system above in the following simpler form (which comes directly from relation (\ref{resredG01simple})):
\begin{equation}\label{system}
\left\{
\begin{aligned}
& \dim \left( \tilde{V}_{0}\right)  =\dim \left( V_{0}\right) \left( 
\Delta_{0}\right), \\
& \dim \left( \tilde{V}_{1}\right)  =\dim \left( V_{1}\right) \left( 
\Delta_{1}\right),    \\
& \;\;\;\;\;\;\;\;\;\;\;\;\;\;\;\;\;\;\;\; \vdots  \\
& \dim \left( \tilde{V}_{N-1}\right) =\dim \left( V_{N-1}\right) \left( 
\Delta_{N-1}\right),  \\
& \tilde{P} =\sum_{A}^{N-1}\Delta _{A}+1 .
\end{aligned} \right.
\end{equation}
If this system admits a solution (which, if exists, is \textit{unique}), then
we will immediately know, for construction, that it is possible to reach a $S$-expanded, $0_S$-resonant-reduced (super)algebra $\mathcal{G}_{S_{RR}}$ starting from the initial Lie (super)algebra $\mathcal{G}$ with the considered partition over subspaces, and we will also know the way in which the elements of $\tilde{S}$ are distributed into different subsets, \textit{i.e.} the cardinality of the subsets associated with the subspaces of the initial Lie (super)algebra. 

In fact, knowing the dimensions of the partitions of both the initial and the target (super)algebra, the system (\ref{system}) can be solved with respect to the variables
\begin{equation}\label{solutiontothesystem}
\tilde{P}, \; \Delta_A, \; A=0,..., N-1 ,
\end{equation}
and the solution (\ref{solutiontothesystem}) admits only values in $\mathbb{N}^*$ (the value zero is obviously excluded). 

We can observe that the system (\ref{system}) admits solution if and only if the dimensions of the subspaces of the target (super)algebra are proportional (multiples) to the dimensions of the respective subspaces of the initial one \footnote{This is the reason why, if the system (\ref{system}) admits a solution, this solution is trivially unique.}. 
Furthermore, this system admits solutions only if the number of subspaces in the partition of the target (super)algebra is equal to the number of subspaces in the partition of the starting one. 
These considerations offer a criterion to properly choose a partition over subspaces for both the initial and the target Lie (super)algebras, namely:
\begin{itemize}
\item The number of subspaces in the partition of the target (super)algebra must be equal to that of the starting (super)algebra;
\item The dimensions of the subspaces of the target (super)algebra must be multiples of the dimensions of the respective subspaces of the initial one. 
\end{itemize}
Once these two conditions over the partitions are met, one is able to develop our analytic method and find all the semigroup(s), \textit{with respect to the chosen partitions}, linking the considered (super)algebras.

We can also observe that the system (\ref{system}) can also be solved when considering an $S$-expansion including just a \textit{resonant} processes, since it also contains the subsystem
\begin{equation}\label{systemres}
\left\{
\begin{aligned}
& \dim \left( \tilde{V}_{0}\right) =\dim \left( V_{0}\right) \left( 
\Delta_{0}\right),   \\
& \dim \left( \tilde{V}_{1}\right) =\dim \left( V_{1}\right) \left( 
\Delta_{1}\right),    \\
& \;\;\;\;\;\;\;\;\;\;\;\;\;\;\;\;\;\;\;\; \vdots  \\
& \dim \left( \tilde{V}_{N-1}\right) =\dim \left( V_{N-1}\right) \left( 
\Delta_{N-1}\right),  \\
& \tilde{P} =\sum_{A}^{N-1}\Delta _{A},
\end{aligned} \right.
\end{equation}
in which we can clearly see that we are now considering the variable $\tilde{P} =\sum_{A}^{N-1}\Delta _{A}$ \textit{without} the $+1$ contribution, whose presence was due to the inclusion of the zero element $\lambda_{0_S}$. In this case, the solution to the system (\ref{systemres}) is \textit{unique} again, and the considerations done for the $0_S$-resonant-reduced case still hold.

At this point, we know the cardinality of each of the subsets of the set $\tilde{S}$ involved in the process. Now we can understand something more about the multiplication rules of the set $\tilde{S}$, by studying the adjoint representation of the initial Lie (super)algebra with respect to the partition over subspaces.  

Thus, we construct, for each subspace, the adjoint representation with respect to the subspaces. This construction is based on the association
\begin{equation}
\left[V_A,V_B \right] \subset V_C \;\;\; \longrightarrow \;\;\; (C)_{AB}^{C},
\end{equation}
where the index $A,B,C$ can take the values $0,...,N-1$, and where the matrix $(C)_{AB}^{C}$ give us the adjoint representation over the subspace $A$, which can be written as
\begin{equation}
(C)_{AB}^{C} =
\begin{pmatrix}
(C)_{A0}^{0} & (C)_{A0}^{1} & \cdots & (C)_{A0}^{N-1} \\ 
(C)_{A1}^{0} & (C)_{A1}^{1} & \cdots & \vdots \\ 
\vdots & \vdots & \ddots & \vdots \\ 
(C)_{A \; N-1}^{0} & \cdots & \cdots & (C)_{A \; N-1}^{N-1}
\end{pmatrix} .
\end{equation}
In this way, the matrix $(C)_{AB}^{C}$ is written by exploiting the commutation rules between the different partitions over the subspaces of the initial (super)algebra, and it contains the whole information about these partitions.

From this adjoint-like representation over the subspaces of the initial (super)algebra, we can now write, according to the usual $S$-expansion procedure (as it was done in \cite{Iza1}), the relations
\begin{equation}\label{eqsemigrold}
\begin{aligned}
&\left[ \left(  S_{\Delta _{A}} \otimes V_{A}\right) \oplus \left( \lbrace \lambda_{0_S}\rbrace \otimes V_A \right) ,\left(
 S_{\Delta _{B}} \otimes V_{B}\right)\oplus \left( \lbrace \lambda_{0_S}\rbrace \otimes V_B \right) \right] = \\
& = \left(K_{\left( \Delta _{A}\right) \left( \Delta _{B}\right) }^{\left( \Delta
_{C}\right) }(C)_{AB}^{C}\right) \left[\left( S_{\Delta
_{C}}\otimes V_C\right)\oplus \left( \lbrace \lambda_{0_S}\rbrace \otimes V_C \right)\right] ,
\end{aligned}
\end{equation}
where we have also taken into account the presence of the zero element in the set $\tilde{S}$, since we have considered a $0_S$-resonant-reduction process \footnote{A process involving only resonance would be a simpler one, and it would require a similar (but simpler) analysis, since, in that case, one would relax the reduction condition.}.
Here, the composite index $\Delta_A$, $\Delta_B$, and $\Delta_C$ label, as said before, the cardinality of the different subsets (labeled with the capital Greek index $\Delta$), uniquely associated with the different subspace partitions (labeled with capital Latin index). 
In order to write the relation (\ref{eqsemigrold}), we are also taking into account the following theorem:

\begin{teo}\label{T}
In the $S$-expansion procedure, when the commutator of two generators in the original Lie (super)algebra falls into a linear combination involving more than one generator, all the terms appearing in this resultant linear combination of generators must share the same element of the set $\tilde{S}$ involved in the procedure.
\end{teo}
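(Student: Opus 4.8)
The plan is to prove Theorem \ref{T} by tracing through the definition of the $S$-expanded Lie bracket \eqref{expandedone} and showing that the semigroup factor $K_{\alpha\beta}^{\;\;\;\gamma}$ rigidly ties every generator on the right-hand side of a commutator to one and the same element $\lambda_\gamma$ of $\tilde{S}$. First I would fix two basis elements $T_{(A,\alpha)}=\lambda_\alpha T_A$ and $T_{(B,\beta)}=\lambda_\beta T_B$ of $\mathcal{G}_S=\tilde{S}\otimes\mathcal{G}$ and suppose that, in the original (super)algebra, the commutator expands as a genuine linear combination,
\begin{equation}
\left[T_A,T_B\right]=C_{AB}^{\;\;\;\;C_1}T_{C_1}+C_{AB}^{\;\;\;\;C_2}T_{C_2}+\cdots ,
\end{equation}
involving more than one generator $T_{C_1},T_{C_2},\dots$ with nonzero structure constants. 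The key observation is that the induced commutator is defined as $\left[T_{(A,\alpha)},T_{(B,\beta)}\right]\equiv\lambda_\alpha\lambda_\beta\left[T_A,T_B\right]$, so the semigroup product $\lambda_\alpha\lambda_\beta$ is computed once, before being distributed over the sum.

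The central step is to invoke the semigroup closure: since $\tilde{S}$ is a set with a well-defined internal product, $\lambda_\alpha\lambda_\beta$ equals a single definite element, say $\lambda_\gamma\in\tilde{S}$, which by \eqref{kseldef} is encoded by $K_{\alpha\beta}^{\;\;\;\gamma}=1$ for exactly that one value of $\gamma$ and zero otherwise. Substituting into \eqref{expandedone} and linearity of the bracket then gives
\begin{equation}
\left[T_{(A,\alpha)},T_{(B,\beta)}\right]=K_{\alpha\beta}^{\;\;\;\gamma}\left(C_{AB}^{\;\;\;\;C_1}T_{(C_1,\gamma)}+C_{AB}^{\;\;\;\;C_2}T_{(C_2,\gamma)}+\cdots\right),
\end{equation}
so every term $T_{(C_i,\gamma)}=\lambda_\gamma T_{C_i}$ carries the \emph{same} semigroup label $\lambda_\gamma$. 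This is precisely the assertion of the theorem: the multiplication structure of $\tilde{S}$ acts as an overall scalar factor on the whole linear combination and cannot split it into pieces sitting in different subsets of $\tilde{S}$.

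I would then close the argument by emphasizing the structural consequence relevant to the resonance/reduction construction: because all generators appearing in a single commutator necessarily share one element of $\tilde{S}$, the two-selector $K_{(\Delta_A)(\Delta_B)}^{\;\;\;(\Delta_C)}$ appearing in \eqref{eqsemigrold} can legitimately be pulled out as a common factor multiplying the entire adjoint-representation block $(C)_{AB}^{C}$, which is exactly what makes the matrix equation \eqref{eqsemigrold} consistent. The main obstacle is not computational—the calculation above is short—but rather conceptual and notational: one must be careful to stress that the result hinges on $\lambda_\alpha\lambda_\beta$ being a \emph{single} element (a genuine product in the set), so that even when $\tilde{S}$ is only an abelian set rather than a full semigroup the distribution over the linear combination still produces a common label. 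I would therefore state explicitly that the proof uses only the existence of a well-defined pairwise product in $\tilde{S}$ together with the bilinearity of the bracket, and does \emph{not} require associativity, which keeps the theorem valid in the broader ``abelian set'' setting alluded to in the footnote following \eqref{expandedone}.
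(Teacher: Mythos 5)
Your proof is correct and rests on exactly the same key fact as the paper's: the internal composition law of $\tilde{S}$, as encoded in the two-selector definition \eqref{kseldef}, assigns a \emph{unique} product $\lambda_\gamma=\lambda_\alpha\lambda_\beta$ to each pair of elements, which then distributes as a common factor over the whole linear combination. The only difference is presentational --- you argue directly, whereas the paper phrases the same uniqueness argument as a \textit{reductio ad absurdum} (assuming distinct $\lambda_{\gamma_1},\dots,\lambda_{\gamma_n}$ and deriving a violation of uniqueness) --- so the two proofs are essentially identical, and your explicit remark that associativity is never used is a correct and worthwhile observation consistent with the paper's footnote and its Bianchi example.
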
 

\begin{proof}
The demonstration of this theorem can be treated as a proof by contradiction (\textit{reductio ad absurdum}). In fact, if the linear combination of generators were coupled with different elements of the set $\tilde{S}$ involved in the procedure, we would have
\begin{equation}\label{linearcombksel}
\begin{aligned}
& \left[T_{(A,\alpha)}, T_{(B,\beta)} \right] = \left[\lambda_\alpha T_{A},\lambda_\beta T_{B} \right]= \lambda_\alpha \lambda_\beta [T_A,T_B] = \\
& = K_{\alpha \beta}^{\gamma_1}C_{AB}^{\;\;\;\;\;C_1}T_{(C_1,\gamma_1)} + K_{\alpha \beta}^{\gamma_2}C_{AB}^{\;\;\;\;\;C_2}T_{(C_2,\gamma_2)} +\cdots + K_{\alpha \beta}^{\gamma_n}C_{AB}^{\;\;\;\;\;C_n}T_{(C_n,\gamma_n)} = \\
& =  K_{\alpha \beta}^{\gamma_1}C_{AB}^{\;\;\;\;\;C_1} \lambda_{\gamma_1}T_{C_1} + K_{\alpha \beta}^{\gamma_2}C_{AB}^{\;\;\;\;\;C_2}\lambda_{\gamma_2}T_{C_2} + \cdots + K_{\alpha \beta}^{\gamma_n}C_{AB}^{\;\;\;\;\;C_n}\lambda_{\gamma_n}T_{C_n} ,
\end{aligned}
\end{equation}
where $\lbrace \lambda_\alpha,\lambda_\beta,\lambda_{\gamma_1},\lambda_{\gamma_2},..., \lambda_{\gamma_n} \rbrace \in \tilde{S}$, $\lbrace T_A \rbrace \in V_A$, $\lbrace T_B \rbrace \in V_B$, $\lbrace T_{C_1},T_{C_2},..., T_{C_n}\rbrace \in V_C$ \footnote{Here we denote with $V_A$, $V_B$, and $V_C$ the subspaces of the partition over the original Lie (super)algebra.}, and where
\begin{equation}\label{diffksel}
K_{\alpha \beta}^{\gamma_1}\neq K_{\alpha \beta}^{\gamma_2}\neq \cdots \neq K_{\alpha \beta}^{\gamma_n}.
\end{equation}
Equations (\ref{linearcombksel}) and (\ref{diffksel}) would mean that different two-selectors were associated with the same resulting element, and, according to the definition of two-selector given in (\ref{kseldef}), this would imply 
\begin{equation}
\lambda_\alpha \lambda_\beta = \lambda_{\gamma_1}= \lambda_{\gamma_2}= \lambda_{\gamma_n},
\end{equation}
with $\gamma_1 \neq \gamma_2 \neq \cdots \neq \gamma_n$, which would break the uniqueness of the internal composition law of the set $\tilde{S}$. 

But this cannot be true, since the composition law associates each couple of elements $\lambda_\alpha$ and $\lambda_\beta$ with a \textit{unique} element $\lambda_\gamma$ (as we can see in the definition (\ref{kseldef})). Thus, we can conclude that when the commutator of two generators in the original Lie (super)algebra falls into a linear combination involving more than one generator, the terms appearing in this resultant linear combination of generators must be multiplied by the same element.
\end{proof}

Theorem \ref{T} reflects on the commutators involving the subspaces of the partition of the original Lie (super)algebra and the subsets of the set $\tilde{S}$.

In fact, if the subspaces involving the linear combination of generators were coupled with different elements of $\tilde{S}$, we would have 
\begin{equation}\label{linearcombkselS}
\begin{aligned}
& \left[\left(\lbrace\lambda_{\alpha ,\Delta_A}\rbrace\otimes V_A\right)\oplus \left(\lbrace\lambda_{0_S}\rbrace\otimes V_A \right),\left(\lbrace\lambda_{\beta ,\Delta_B}\rbrace\otimes V_B\right)\oplus \left(\lbrace\lambda_{0_S}\rbrace\otimes V_B \right) \right] = \\
& = \left(K^{(\gamma_1, \Delta_C)}_{(\alpha, \Delta_A)(\beta, \Delta_B)}(C)^C_{AB} \right)\left[ \left(\lbrace\lambda_{\gamma_1 ,\Delta_C}\rbrace\otimes V_C\right)\oplus \left(\lbrace\lambda_{0_S}\rbrace\otimes V_C \right)\right] + \\
& + \left(K^{(\gamma_2, \Delta_C)}_{(\alpha, \Delta_A)(\beta, \Delta_B)}(C)^C_{AB} \right)\left[ \left(\lbrace\lambda_{\gamma_2, \Delta_C}\rbrace\otimes V_C\right)\oplus \left(\lbrace\lambda_{0_S}\rbrace\otimes V_C \right)\right] + \\
& \;\;\;\;\;\;\;\;\;\;\;\;\;\;\;\;\;\;\;\; \vdots \\
& + \left(K^{(\gamma_n, \Delta_C)}_{(\alpha, \Delta_A)(\beta, \Delta_B)}(C)^C_{AB} \right)\left[ \left(\lbrace\lambda_{\gamma_n, \Delta_C}\rbrace\otimes V_C\right)\oplus \left(\lbrace\lambda_{0_S}\rbrace\otimes V_C \right)\right] ,
\end{aligned}
\end{equation}
where with $\lambda_{\alpha, \Delta_A}$ we denote an arbitrary element $\lambda_\alpha$ contained in the subset $S_A$, associated with the subspace $V_A$, with cardinality $\Delta$ \footnote{The same notation has been adopted in equation (\ref{linearcombkselS}) for all the other elements of the set $\tilde{S}$.}, and where 
\begin{equation}\label{diffkselS}
K^{(\gamma_1, \Delta_C)}_{(\alpha, \Delta_A)(\beta, \Delta_B)} \neq K^{(\gamma_2, \Delta_C)}_{(\alpha, \Delta_A)(\beta, \Delta_B)} \neq \cdots \neq K^{(\gamma_n, \Delta_C)}_{(\alpha, \Delta_A)(\beta, \Delta_B)}.
\end{equation}
Equations (\ref{linearcombkselS}) and (\ref{diffkselS}) would mean that different two-selectors were associated with the same resulting element, which would break the uniqueness of the internal composition law of the set $\tilde{S}$.
Thus, since the composition law associates each couple of elements in $\tilde{S}$ with a \textit{unique} element of the set $\tilde{S}$, we can finally say that
\begin{equation}
\begin{aligned}
& \left[\left(\lbrace\lambda_{\alpha ,\Delta_A}\rbrace\otimes V_A\right)\oplus \left(\lbrace\lambda_{0_S}\rbrace\otimes V_A \right),\left(\lbrace\lambda_{\beta ,\Delta_B}\rbrace\otimes V_B\right)\oplus \left(\lbrace\lambda_{0_S}\rbrace\otimes V_B \right) \right] = \\
& = \left(K^{(\gamma, \Delta_C)}_{(\alpha, \Delta_A)(\beta, \Delta_B)}(C)^C_{AB} \right)\left[ \left(\lbrace\lambda_{\gamma ,\Delta_C}\rbrace\otimes V_C\right)\oplus \left(\lbrace\lambda_{0_S}\rbrace\otimes V_C \right)\right].
\end{aligned}
\end{equation}

By exploiting the statement of Theorem \ref{T}, we can also say that when the commutator of two generators of the original Lie (super)algebra falls into a linear combination involving more then one generator, the intersection between the subsets of the set $\tilde{S}$ could be a non-empty set, which means that the same element(s) will appear in more than one subset of the set $\tilde{S}$ \footnote{In the example given in Subsection \ref{ospexample}, and in particular in Appendix \ref{osp}, we have used this statement; The reader can find the explicit application of this observation in equations (\ref{lastcomm}) and (\ref{equal}).}.

Furthermore, we can also observe that the $S$-expansion procedure does \textit{not always} reproduce an In\"on\"u-Wigner contraction, and this is due to the fact that in the In\"on\"u-Wigner contraction there are some terms in the commutation relations which can go to zero separately, while this cannot happen when one is dealing with the $S$-expansion, where, in fact, the combination of two-selectors appearing in the left-hand-side of equation (\ref{linearcombksel}) can only give either zero or a single two-selector. 
Thus, one can apply our analytic method with the \textit{exception} of the cases in which the $S$-expansion procedure cannot reproduce the In\"on\"u-Wigner contraction.

We may observe that equation (\ref{eqsemigrold}) can be rewritten in a simpler form (due to the fact that the left hand side produces commutation relations that trivially conduce to the zero element of the set $\tilde{S}$), which reads
\begin{equation}\label{eqsemigr}
\left[ S_{\Delta _{A}}\otimes V_{A} ,
 S_{\Delta _{B}} \otimes V_{B} \right] = \left(K_{\left( \Delta _{A}\right) \left( \Delta _{B}\right) }^{\left( \Delta
_{C}\right) }(C)_{AB}^{C}\right) \left[\left( S_{\Delta
_{C}} \otimes V_C\right)\oplus \left( \lbrace \lambda_{0_S}\rbrace \otimes V_C \right)\right] ,
\end{equation}
so as to highlight the information we need to know about the multiplication rules between the elements in the set $\tilde{S}$.

We can now proceed with the development of our analytic method.
The relation (\ref{eqsemigrold}) gives us a first view on the multiplication rules between the elements of the set $\tilde{S}$, since it tells us the way in which the different subsets of $\tilde{S}$ combine among each other, that is to say
\begin{equation}\label{semigrprodold}
\left(S_{\Delta_A}\cup \lbrace \lambda_{0_S}\rbrace \right)\cdot \left(S_{\Delta_B} \cup \lbrace \lambda_{0_S}\rbrace \right) \subset S_{\Delta_C} \cup \lbrace \lambda_{0_S}\rbrace ,
\end{equation}
where the product ``$\cdot$" is the internal product of the set $\tilde{S}$, and thus between its subsets. According to the relation in (\ref{eqsemigr}), equation (\ref{semigrprodold}) can also be rewritten as
\begin{equation}\label{semigrprod}
S_{\Delta_A}\cdot S_{\Delta_B} \subset S_{\Delta_C} \cup \lbrace \lambda_{0_S}\rbrace .
\end{equation}

We have thus exhausted the information coming from the starting (super)algebra $\mathcal{G}$, and we have gained a first view on the multiplication rules of the elements of the subsets of $\tilde{S}$. Now we can exploit the information coming from the target (super)algebra, in order to fix some detail on the multiplication rules and to build up the whole multiplication table describing the set $\tilde{S}$. This step is based on the following \textit{identification criterion}.  

\subsection{Identification criterion}\label{IdCr}

Until now, we have exploited the information coming from the original (super)algebra. 

It is now necessary to understand the structure of the whole multiplication table of the set $\tilde{S}$. To this aim, the other pieces of information we need to know come from the target Lie (super)algebra.
In fact, since at this point, we already know the composition laws between the subsets of $\tilde{S}$, we can now write the following identification between the $S$-expanded generators of the initial Lie (super)algebra and the generators of the target one: 
\begin{equation} \label{identification}
\tilde{T}_A = T_{A,\alpha} \equiv \lambda_\alpha T_A,
\end{equation}
where $T_A$ are the generators included in the subspace $V_A$ of the starting (super)algebra and $\tilde{T}_A$ are the generators in the subspace $\tilde{V}_A$ of the target (super)algebra, and where $\lambda_\alpha \in \tilde{S}$ is a general element of the set $\tilde{S}$ \footnote{We have to remember that when performing our analytic method, we are just talking about a general set $\tilde{S}$, since we do not know yet whether it is, or is not, a semigroup. However, the final check for associativity will tell us if the set $\tilde{S}$ is or is not a semigroup.}.

We have to perform the identification (\ref{identification}) for each element of the set $\tilde{S}$, associating each element of each subset with the generators in the subspace related to the considered subset, that is to say, in our notation,
\begin{eqnarray}
\tilde{T}_A = \lambda_{(\alpha,\Delta_A )}T_A,
\end{eqnarray}
where $\lambda_{(\alpha, \Delta_A)}\equiv \lambda_{\alpha} \in S_{\Delta_A}$.

We can observe that in the development of our analytic method, we can perform the whole procedure of association and identification \textit{without} affecting the internal structure of the generators of the starting (super)algebra.

With the identification (\ref{identification}), we can link the commutation relations between the generators of the target (super)algebra with the commutation relations of the $S$-expanded ones, and, factorizing the elements of the set $\tilde{S}$, we have the chance of fixing the multiplication relations between these elements. To this aim, we first observe that for the target (super)algebra we can write the commutation relations
\begin{equation}\label{commtarg}
\left[ \tilde{T}_A, \tilde{T}_B \right] =\tilde{C}_{AB}^{\;\;\;\;C} \tilde{T}_C,
\end{equation}
where $\tilde{T}_A$, $\tilde{T}_B$ , and $\tilde{T}_C$ are the generators in the subspaces $\tilde{V}_A$, $\tilde{V}_B$, and $\tilde{V}_C$ of the partition over the target Lie (super)algebra, respectively ($A,B,C \in \lbrace 0,..., N-1\rbrace$). Here, with $\tilde{C}_{AB}^{\;\;\;\;C}$ we denote the structure constants of the target Lie (super)algebra, that is to say $\tilde{C}_{AB}^{\;\;\;\;C} \equiv C_{(A,\alpha)(B,\beta)}^{\;\;\;\;\;\;\;\;\;\;\;\;\;\;\;\;(C,\gamma)}$, in the usual notation.
Then, by following the usual $S$-expansion procedure (see Ref. \cite{Iza1}),
since for the initial (super)algebra we can write
\begin{equation}\label{commstart}
\left[T_A, T_B \right]= C_{AB}^{\;\;\;\;C}\; T_C ,
\end{equation}
where we have adopted the same notation used in the case of the target (super)algebra, and 
where $ C_{AB}^{\;\;\;\;C}$ are the structure constants of the initial Lie (super)algebra, 
we are able to write the relations (\ref{expandedone}). We also report them here for completeness:
\begin{equation}
\left[T_{(A,\alpha)},T_{(B,\beta)}\right]= K_{\alpha \beta}^{\;\;\;\; \gamma} C_{AB}^{\;\;\;\;C}\; T_{(C,\gamma)},
\end{equation}
namely
\begin{equation}
\left[\lambda_\alpha T_A,\lambda_\beta T_B \right] = K_{\alpha \beta}^{\;\;\;\; \gamma} C_{AB}^{\;\;\;\;C}\; \lambda_\gamma T_C,
\end{equation}
where the two-selector is defined by (\ref{kseldef}).

We now write the structure constants of the target (super)algebra in terms of the two-selector and of the structure constants of the starting one, namely, reporting equation (\ref{strconstant}) here for completeness,
\begin{equation}
\tilde{C}_{AB}^{\;\;\;\;C} \equiv C_{(A,\alpha)(B,\beta)}^{\;\;\;\;\;\;\;\;\;\;\;\;\;\;\;\;(C,\gamma)}= K_{\alpha \beta}^{\;\;\;\gamma}C_{AB}^{\;\;\;\;C},
\end{equation}
and we exploit the identification (\ref{identification}) in order to write the commutation relations of the target (super)algebra (\ref{commtarg}) in terms of the commutation relations between the $S$-expanded generators of the starting one, factorizing the elements of the set $\tilde{S}$ out of the commutators. 
In this way, we get the following relations:
\begin{equation}\label{commrelafterid}
\left[\lambda_\alpha T_A, \lambda_\beta T_B \right] = K_{\alpha \beta}^{\;\;\;\; \gamma} C_{AB}^{\;\;\;\;C} \lambda_\gamma T_C, \;\;\; \longrightarrow \;\;\; \lambda_\alpha \lambda_\beta \left[T_A, T_B \right] =K_{\alpha \beta}^{\;\;\;\; \gamma} C_{AB}^{\;\;\;\;C} \lambda_\gamma T_C.
\end{equation}
If we now compare the commutation relations (\ref{commrelafterid}) with the ones of the starting (super)algebra in (\ref{commstart}), we are able to deduce something more about the multiplication rules between the elements of $\tilde{S}$, that is to say:
\begin{equation}
\lambda_\alpha \lambda_ \beta = \lambda_\gamma .
\end{equation}
We have to repeat this procedure for all the commutation rules of the target (super)algebra, in order to sculpt the multiplication rules between the elements of the set $\tilde{S}$. 

We observe that, during this process, the possible existence of the zero element in the set $\tilde{S}$, namely $\lambda_{0_S}$, can play a crucial role, since, in the case in which the commutation relations of the target (super)algebra read
\begin{equation}
\left[ \tilde{T}_A, \tilde{T}_B \right] = 0,
\end{equation}
and at the same time from the initial (super)algebra we have
\begin{equation}\label{nozero}
\left[ T_A, T_B \right] \neq 0,
\end{equation}
putting all together the relations
\begin{equation}
\left[\lambda_\alpha T_A, \lambda_\beta T_B \right] = \lambda_\alpha \lambda_\beta \left[ T_A,  T_B \right]  =  C_{AB}^{\;\;\;\;C} \lambda_\gamma T_C =0
\end{equation}
and (\ref{nozero}), we can conclude that
\begin{equation}
\lambda_\alpha \lambda_ \beta = \lambda_{0_S}.
\end{equation}

Thus, at the end of the whole procedure, we are left with the complete multiplication table(s) describing the set(s) $\tilde{S}$ involved in the $S$-expansion (with either resonance or $0_S$-resonant-reduction) process for moving from an initial Lie (super)algebra to a target one. 

The final step consist in checking that $\tilde{S}$ is indeed an abelian semigroup. This is done by checking the \textit{associativity} of the multiplication table(s) (one of the properties required by a set to be defined as a semigroup is, in fact, the associative property).   

\subsection{A note on associativity}\label{Associativity}

The last step consists in analyzing the associative property of the set $\tilde{S}$.
The check for associativity can be rather tedious if performed by hand, but, fortunately, it can be implemented by means of a simple computational algorithm. 
In fact, by mapping the elements $\lambda_i$ of the set $\tilde{S}$ to the set of the integer numbers $\lambda_i\leftrightarrow i\in\mathbb{N}$, it is possible to store the multiplication table of $\tilde{S}$ as a matrix $M$, in a form in which its elements are given by $\lambda_j\lambda_k=\lambda_i\equiv M(j,k)= i$, where $i$ is the index associated with the element $\lambda_i$. Associativity can now be easily tested by checking that, for any $i$, $j$, and $k$, the following relation holds:
\begin{equation}\label{eq_ass_test}
M(M(i,j),k)=M(i,M(j,k)).
\end{equation}

After the check for associativity, the degeneracy of the multiplication table(s) obtained in the analytic procedure after having applied the identification criterion is fixed, and we are left with one (or more) semigroup(s). 

However, in Section \ref{Examples}, we will develop a particular example of application of our analytic method in which, in order to reach the target Bianchi Type II algebra from the Bianchi Type I algebra, the structure of semigroup is \textit{not} necessary, since the procedure can be performed with abelian set(s), without requiring  associativity. This is due to the fact that, in that case, the Jacobi identities of both the mentioned algebras are trivially satisfied (each term of the Jacobi identities is equal to zero). In this work, we just mention this particular case, generalizing the result presented in the literature (see Ref. \cite{Knodrashuk}).

\section{Examples of application} \label{Examples}

In this section, we give some example of application of the analytic method previously developed.
We start with a simple example involving the Bianchi Type I and the Bianchi Type II algebras, and then we move to more complicated cases.
In particular, the last example presented in this section involves the
supersymmetric Lie algebra $osp(32/1)$ and the
hidden superalgebra underlying $D=11$ supergravity, largely discussed in \cite{D'Auria, Ravera}.

The details of the calculations are treated in the Appendix, while in the following we report and discuss our main results.

\subsection{From the Bianchi Type I algebra (BTI) to the Bianchi Type II one (BTII)}

In the following, we apply the method developed in Section \ref{Method} in order to find the possible semigroup(s) leading from the non-trivial Bianchi Type I algebra (BTI) to the Bianchi Type II (BTII) one. To this aim, we first of all analyze the structures of the initial algebra and of the target one.
The only commutator different from zero for the BTI algebra is
\begin{align}
\left[X_1,X_2\right]=X_1,
\end{align}
where $X_1$ and $X_2$ are the generators of the BTI algebra. For the BTII algebra, instead, we have
\begin{align}
\left[Y_1,Y_2\right]=&0, \\
\left[Y_1,Y_3\right]=&0, \\
\left[Y_2,Y_3\right]=&Y_1,
\end{align}
where $Y_1$, $Y_2$, and $Y_3$ are the generators of the BTII algebra.
The details of the calculations are treated in Appendix \ref{Bianchi}, while in the following we report our results.

Performing the steps described in Section \ref{Method}, we obtain the multiplication tables
\begin{equation}\label{megatabla}
\begin{array}{c|cccc}
 &\lambda_a & \lambda_b & \lambda_c &\lambda_{0_S} \\
\hline
\lambda_a & \lambda_{a,0_S} & \lambda_{0_S}& \lambda_{b}&\lambda_{0_S}\\
\lambda_b & \lambda_{0_S}& \lambda_{a,0_S} & \lambda_{a,0_S}&\lambda_{0_S} \\
\lambda_c & \lambda_b & \lambda_{a,0_S}& \lambda_{a,0_S}&\lambda_{0_S}\\
\lambda_{0_S} &\lambda_{0_S} &\lambda_{0_S} &\lambda_{0_S} &\lambda_{0_S}
\end{array}
\end{equation}
We can now perform the following identification:
\begin{equation}\label{idbianchi}
\lambda_a = \lambda_2, \;\;\; \lambda_b = \lambda_3, \;\;\; \lambda_c=\lambda_1, \;\;\; \lambda_{0_S} = \lambda_4.
\end{equation}
Thus, we can rewrite tables (\ref{megatabla}) as follows (where the elements are written in the usual order):
\begin{equation}\label{megatabla1}
\begin{array}{c|cccc}
 &\lambda_1 & \lambda_2 & \lambda_3 &\lambda_4 \\
\hline
\lambda_1 & \lambda_{2,4} & \lambda_3 & \lambda_{2,4} &\lambda_{4}\\
\lambda_2 & \lambda_{3}& \lambda_{2,4} & \lambda_{2,4} &\lambda_{4} \\
\lambda_3 & \lambda_{2,4} & \lambda_{2,4} & \lambda_{2,4}&\lambda_{4}\\
\lambda_4 &\lambda_{4} &\lambda_{4} &\lambda_{4} &\lambda_{4}
\end{array}
\end{equation}
These are the multiplication tables of the possible sets $\tilde{S}$'s involved in the $S$-expansion,$0_S$-resonant-reduced procedure from the BTI algebra to the BTII one. Here we clearly see that the tables described in (\ref{megatabla1}) also include abelian sets that cannot be defined as semigroup, since they do not possess the associative property. 
Fortunately, for both the BTI and BTII algebras, each term of the Jacobi identity
is equal to zero (thus, the Jacobi identity is trivially satisfied), and thus each possible combination of elements in (\ref{megatabla1}) is valid for describing an expansion procedure involving both resonance and reduction, without the necessity of requiring associativity. Thus, the multiplication tables (\ref{megatabla1}) generalize the result previously obtained in \cite{Knodrashuk}.

We can also perform a last step, in order to find the table(s) in (\ref{megatabla}) which describe semigroup(s). This step consists in exploiting the required property of associativity, in order to fix the degeneracy on the multiplication tables (\ref{megatabla}) and finding the semigroup(s) involved in the process. The calculation is rather tedious to be performed by hand, and we have done it with a computational algorithm. For completeness, here we report only the significant relations for checking associativity by hand and understanding which are the semigroups in (\ref{megatabla}):
\begin{align}
& (\lambda_c \lambda_c ) \lambda_b = \lambda_c (\lambda_c \lambda_b ) \;\;\;\Rightarrow \;\;\; \lambda_c \lambda_b = \lambda_{0_S} , \\
& (\lambda_c \lambda_a ) \lambda_a = \lambda_c (\lambda_a \lambda_a ) \;\;\;\Rightarrow \;\;\; \lambda_a \lambda_a = \lambda_{0_S} , \\
& (\lambda_c \lambda_b ) \lambda_b = \lambda_c (\lambda_b \lambda_b ) \;\;\;\Rightarrow \;\;\; \lambda_b \lambda_b = \lambda_{0_S} .
\end{align}
After having checked associativity, we are thus left with the only degeneracy
\begin{equation}
\lambda_{c} \lambda_{c}= \lambda_{a,0_S}.
\end{equation}
We can now substitute the index $a,b,c,0_S$ with numbers. We perform again the identification (\ref{idbianchi}), and we write the multiplication tables thus obtained in terms of $\lambda_i$, with $i=1,2,3,4$, in the usual order:
\begin{equation}\label{tablaf}
\begin{array}{c|cccc}
 &\lambda_1 & \lambda_2 & \lambda_3 &\lambda_4 \\
\hline
\lambda_1 & \lambda_{2,4} & \lambda_{3}& \lambda_{4}&\lambda_4\\
\lambda_2 & \lambda_{3}& \lambda_{4} & \lambda_{4}&\lambda_4 \\
\lambda_3 & \lambda_{4} & \lambda_{4}& \lambda_{4}&\lambda_4\\
\lambda_4 &\lambda_4 &\lambda_4 &\lambda_4 &\lambda_4
\end{array}
\end{equation}
We observe that the abelian, commutative and associative tables (\ref{tablaf}) include the multiplication table of the semigroup $S_{N2}$ described in \cite{Knodrashuk}, namely
\begin{equation}\label{tablaSN2}
\begin{array}{c|cccc}
&\lambda_1 & \lambda_2 & \lambda_3 &\lambda_4 \\
\hline
\lambda_1 & \lambda_{2} & \lambda_{3}& \lambda_{4}&\lambda_4\\
\lambda_2 & \lambda_{3}& \lambda_{4} & \lambda_{4}&\lambda_4 \\
\lambda_3 & \lambda_{4} & \lambda_{4}& \lambda_{4}&\lambda_4\\
\lambda_4 &\lambda_4 &\lambda_4 &\lambda_4 &\lambda_4
\end{array}
\end{equation}
which represents a possible semigroup for moving form a BTI algebra to a BTII algebra, through a $0_S$-resonant-reduction procedure. The degeneracy appearing in (\ref{tablaf}) (namely $\lambda_1 \lambda_1 = \lambda_{2,4}$) shows us that there are two possible semigroups able to give the same result (one of them is the same described it \cite{Knodrashuk}, $S_{N2}$, while the other one is a new result that we have obtained with our analytic procedure).

We have thus given an example in which the method described in Section \ref{Method} allows us to find the semigroups for moving from the BTI algebra to a $S$-expanded, $0_S$-resonant-reduced one (BTII), once the partitions over subspaces have been properly chosen. 

We can now try to achieve the same result, by considering an $S$-expansion with only a \textit{resonant} structure (relaxing the reduction condition). To this aim, we study the system (\ref{systemres}), which, in this case, is solved by
\begin{equation}
\tilde{P}=3, \;\;\; \Delta_0=1, \;\;\; \Delta_1=2 .
\end{equation}
Then, performing the usual procedure (see Section \ref{Method}) and assuming the same identification presented in the detailed calculations for the previous case (see Appendix \ref{Bianchi}), namely
\begin{align}
\lambda_a X_2=&Y_3 \\
\lambda_b X_1=&Y_1\\
\lambda_c X_1=&Y_2,
\end{align}
we can reach the multiplication rules between the elements of the set $\tilde{S}$, after having faced the particular situation in which
\begin{align}
\left[Y_1,Y_3\right]&=0, \label{commdiff} \\
\left[\lambda_bX_1,\lambda_aX_3\right]&=0,\\
\lambda_b\lambda_a\left[X_1,X_2\right]&=0,
\end{align}
where $\left[X_1,X_2\right]=X_1\neq0$. As we can see in equation (\ref{commdiff}), the generators $Y_1$ and $Y_3$ of the target algebra must commute, while the generators $X_1$ and $X_2$ of the starting algebra do not commute; so, the only way for reaching a consistent multiplication rule between the elements $\lambda_a$ and $\lambda_b$ consists in adding a zero element in the set $\tilde{S}$ involved in the process, such that 
\begin{equation}
\lambda_b \lambda_a = \lambda_{0_S}.
\end{equation}
The inclusion of the zero element is consistent, since this modification just affects the variable $\tilde{P}$ in the system (\ref{systemres}), which increases of $+1$ (namely, $\tilde{P}=4$). In this way, the multiplication table of the set $\tilde{S}$ acquires both a new row and a new column, without affecting associativity, in the case of a semigrup table.

We can thus conclude that, in this case, our analytic method shows us the necessity of including a $0_S$-reduction to the resonant process too. 
Thus, we can reach the following multiplication rules:
\begin{align}
\lambda_a \lambda_a =& \lambda_a , \nonumber \\
\lambda_c \lambda_a =& \lambda_b , \nonumber \\
\lambda_{b,c} \lambda_{b,c} =& \lambda_a , \nonumber \\
\lambda_{b} \lambda_{a} =& \lambda_{0_S},
\end{align}
and the multiplication table, after having performed the identification
\begin{equation}
\lambda_a = \lambda_2, \;\;\; \lambda_b = \lambda_3, \;\;\; \lambda_c=\lambda_1, \;\;\; \text{with the extra zero element} \; \lambda_{0_S}=\lambda_4,
\end{equation}
reads
\begin{equation}\label{onlyrestablefin}
\begin{array}{c|ccc}
& \lambda_1 & \lambda_2 & \lambda_3\\
\hline 
\lambda_1 & \lambda_2 & \lambda_3 & \lambda_2\\
\lambda_2 & \lambda_3 & \lambda_2 & \lambda_4\\
\lambda_3 & \lambda_2 & \lambda_4 & \lambda_2
\end{array}\longrightarrow \begin{array}{c|cccc}
& \lambda_1 & \lambda_2 & \lambda_3 & \lambda_4\\
\hline 
\lambda_1 & \lambda_2 & \lambda_3 & \lambda_2& \lambda_4\\
\lambda_2 & \lambda_3 & \lambda_2 & \lambda_4 & \lambda_4\\
\lambda_3 & \lambda_2 & \lambda_4 & \lambda_2& \lambda_4\\
\lambda_4 & \lambda_4 & \lambda_4 & \lambda_4 & \lambda_4
\end{array}
\end{equation} 
Let us finally observe that table (\ref{onlyrestablefin}), which is abelian (but \textit{not} associative), is included in the multiplication tables (\ref{megatabla1}), previously obtained in the context of $0_S$-resonant-reduction.

\subsection{$iso(2,1)$ from the $0_{S}$-resonant-reduction of $so(2,2)$}\label{soisomain}

In this example, our aim is to find the multiplication table(s) of the semigroup(s) connecting
the Lie algebras $so\left(2,2\right) $ and $iso\left( 2,1\right) $ in three dimensions, through a $0_S$-resonant-reduction process, after having properly chosen the partitions over subspaces.

We can write $so(2,2)=\{J_i,P_i\}$, with $i=1,2,3$, and $iso(2,1)=\{\tilde{J}_i,\tilde{P}_i\}$, with $i=1,2,3$, where we have considered $J^i = \frac{1}{2}\epsilon^{ijk}J_{jk}$, according to the notation used in \cite{Witten}. The commutation relations between the generators of the starting $so(2,2)$ algebra can be simply written as 
\begin{align}
& \left[ J_{i},J_{j}\right] = \epsilon_{ijk}  J^{k}, \\
& \left[ J_{i},P_{j}\right] = \epsilon_{ijk}  P^{k}, \\
& \left[ P_{i},P_{j}\right] = \epsilon_{ijk}  J^{k},
\end{align}%
where $i,j,k,...=1,2,3$, and the commutation relations between the
generators of the target $iso(2,1)$ algebra can be written as 
\begin{align}
& \left[ \tilde{J}_{i},\tilde{J}_{j}\right] = \epsilon_{ijk} \tilde{J}^{k}, \\
& \left[ \tilde{J}_{i},\tilde{P}_{j}\right] = \epsilon_{ijk} \tilde{P}^{k}, \\
& \left[ \tilde{P}_{i},\tilde{P}_{j}\right] = 0,
\end{align}%
where, again, $i,j,k,...=1,2,3$. 
Thus, following the procedure described in Section \ref{Method}, we reach the multiplication table 
\begin{equation}
\begin{array}{c|ccc}
& \lambda _{0} & \lambda _{1} & \lambda _{2} \\ \hline
\lambda _{0} & \lambda _{0} & \lambda _{1} & \lambda _{2} \\ 
\lambda _{1} & \lambda _{1} & \lambda _{2} & \lambda _{2} \\ 
\lambda _{2} & \lambda _{2} & \lambda _{2} & \lambda _{2}%
\end{array}
\label{tabla123}
\end{equation}%
The detailed calculations are treated in Appendix \ref{soiso}, while in the following we discuss our result.

Table (\ref{tabla123}) is an abelian, commutative and associative multiplication table (the check for associativity in this case is simple). Thus, we are left with the semigroup that allows us to move from the Lie algebra $so(2,2)$ to the Lie algebra $iso(2,1)$ one
through $0_{S}$-resonant-reduction, just performing the analytic procedure
described in Section \ref{Method}. We have found out that a single semigroup (with respect to the chosen partitions) is involved in the process, and it corresponds to the well known semigroup $S^{(1)}_E$, which is given by
\begin{equation}
\lambda _{\alpha }\lambda _{\beta }=\left\{ \begin{aligned} &
\lambda_{\alpha+\beta} , \;\;\;\;\; \text{when} \; \alpha+\beta \leqslant 2,
\\ & \lambda_2 , \;\;\;\;\;\;\;\;\; \text{when} \; \alpha + \beta > 2. \end{aligned}%
\right.
\end{equation}
As we can see, table (\ref{tabla123}) perfectly fits this description.

\subsection{The Maxwell algebra ($\mathcal{M}$) as a $0_S$-resonant-reduction of the Anti-de Sitter ($AdS$) Lie algebra}

With the analytic procedure described in Section \ref{Method}, we can find the semigroup linking the Anti-de Sitter ($AdS$) and the Maxwell ($\mathcal{M}$) algebras, through the $S$-expansion ($0_S$-resonant-reduction) procedure. In the following, we will show that it is exactly the one obtained in \cite{Concha1, Salgado, Concha2}, that is to
say the semigroup $S_{E}^{(2)}$, which satisfies the multiplication law 
\begin{equation}
\lambda _{\alpha }\lambda _{\beta }=\left\{ \begin{aligned} &
\lambda_{\alpha+\beta} , \;\;\;\;\; \text{when} \; \alpha+\beta \leqslant 3,
\\ & \lambda_3 , \;\;\;\;\;\;\;\;\; \text{when} \; \alpha + \beta > 3. \end{aligned}%
\right.
\end{equation}

We start with the analysis of the two mentioned algebras. The generators
of the $AdS$ algebra are $\lbrace J_{ab},P_a \rbrace$, and they satisfy the
commutation relations 
\begin{align}
& \left[ J_{ab}, J_{cd}\right] =
\eta_{bc}J_{ad}-\eta_{ac}J_{bd}-\eta_{bd}J_{ac}+\eta_{ad}J_{bc}, \\
& \left[J_{ab},P_c \right] = \eta_{bc}P_a - \eta_{ac}P_b , \\
& \left[P_a, P_b \right] = J_{ab}.
\end{align}
The generators of the Maxwell algebra $\mathcal{M}$ are $\{\tilde{J}_{ab},\tilde{P}_{a},\tilde{Z}_{ab}\}$, and they satisfy the following commutation relations 
\begin{align}
& \left[ \tilde{J}_{ab},\tilde{Z}_{cd}\right] =\eta _{bc}\tilde{Z}_{ad}-\eta
_{ac}\tilde{Z}_{bd}-\eta _{bd}\tilde{Z}_{ac}+\eta _{ad}\tilde{Z}_{bc}, \\
& \left[ \tilde{Z}_{ab},\tilde{P}_{a}\right] =0, \\
& \left[ \tilde{Z}_{ab},\tilde{Z}_{cd}\right] =0, \\
& \left[ \tilde{J}_{ab},\tilde{J}_{cd}\right] =\eta _{bc}\tilde{J}_{ad}-\eta
_{ac}\tilde{J}_{bd}-\eta _{bd}\tilde{J}_{ac}+\eta _{ad}\tilde{J}_{bc}, \\
& \left[ \tilde{J}_{ab},\tilde{P}_{c}\right] =\eta _{bc}\tilde{P}_{a}-\eta
_{ac}\tilde{P}_{b}, \\
& \left[ \tilde{P}_{a},\tilde{P}_{b}\right] =\tilde{Z}_{ab}.
\end{align}
We observe that a particular characteristic of the Maxwell algebra is given
by the relation 
\begin{equation}
\left[ \tilde{P}_{a},\tilde{P}_{b}\right] =\tilde{Z}_{ab},
\end{equation}%
and that $\tilde{Z}_{ab}$ commutes with all generators of the algebra, except
the Lorentz generators $J_{ab}$. 

Interestingly, the Maxwell algebra $\mathcal{M}$ can be obtained with an In\"on\"u-Wigner contraction of the $AdS$-Lorentz algebra \footnote{This can be easily proved by performing on the $AdS$-Lorentz (super)algebra (the supersymmetric extension of the $AdS$-Lorentz algebra is displayed in Ref. \cite{gaussbonnet}) the following redefinition of the generators ${J}_{ab}\rightarrow {J}_{ab}, \; {Z}_{ab}\rightarrow \frac{1}{\bar{e}^2}{Z}_{ab}, \; {P}_{a}\rightarrow \frac{1}{\bar{e}}{P}_{a} , \left(Q_\alpha \rightarrow \frac{1}{\bar{e}}Q_{\alpha} \right)$, which provides us with
the Maxwell (super)algebra $(s)\mathcal{M}$ in the limit $\bar{e}\rightarrow0$ (here we have relaxed, for simplicity, the notation with the symbol ``tilde" above the generators).} (of which the Lorentz type algebra $\mathcal{L}=\lbrace J_{ab},Z_{ab} \rbrace$ is a subalgebra), whose supersymmetric extension was deeply studied in Ref. \cite{gaussbonnet} in the context of the supersymmetry invariance of a supergravity theory in the presence of a non-trivial boundary. 

The details of the calculations involved in this example are reported in Appendix \ref{Maxwell}, while in the following we discuss our main results.

By performing the analytic procedure described in Section \ref{Method}, we reach the multiplication table
\begin{equation}
\begin{array}{c|cccc}
& \lambda _{0} & \lambda _{1} & \lambda _{2} & \lambda _{3} \\ 
\hline
\lambda _{0} & \lambda _{0} & \lambda _{1} & \lambda _{2} & \lambda _{3} \\ 
\lambda _{1} & \lambda _{1} & \lambda _{2} & \lambda _{3} & \lambda _{3} \\ 
\lambda _{2} & \lambda _{2} & \lambda _{3} & \lambda _{3} & \lambda _{3} \\ 
\lambda _{3} & \lambda _{3} & \lambda _{3} & \lambda _{3} & \lambda _{3}
\end{array}
\label{tablaMaxwell}
\end{equation}%
This table represents an abelian, commutative and associative semigroup, that is exactly the well known semigroup $S^{(2)}_E$ found in \cite{Concha1, Salgado, Concha2}.

\subsection{From the supersymmetric Lie algebra $osp(32/1)$ to the hidden superalgebra underlying $D=11$ supergravity}\label{ospexample}

With this example, we move to \textit{superalgebras}, and in particular we concentrate on the supersymmetric Lie algebra $osp(32/1)$ and on the hidden superalgebra underlying supergravity in eleven dimensions.

Simple supergravity in $D=11$ was first constructed in \cite{Cremmer}.
The bosonic field content of $D=11$ supergravity is given by the metric $g_{\mu\nu}$ and by a $3$-index antisymmetric tensor $A_{\mu\nu\rho}$ ($\mu,\nu,\rho,...=0,1,...,D-1$); The theory also presents a single Majorana gravitino $\Psi_\mu$ in the fermionic sector. 
By dimensional reduction (as it was shown in \cite{Cremmer2}), the theory yields $\mathcal{N}=8$ supergravity in four dimensions, which is considered a possibly viable unification theory of all interactions. 

An important task to accomplish was the identification of the supergroup underlying the theory, and allowing the unification of all elementary particles in a single supermultiplet, since a supergravity theory whose supergroup is unknown is an incomplete one. 

The need for a supergroup was already felt by the inventors of the theory, and in \cite{Cremmer} the authors proposed $osp(32/1)$ as the most likely candidate. However, the field $A_{\mu \nu \rho}$ of the Cremmer-Julia-Scherk theory is a $3$-form rather than a $1$-form, and therefore it cannot be interpreted as the potential of a generator in a supergroup.

The structure of this same theory was then reconsidered in \cite{D'Auria, Ravera},
in the Free Differential Algebra (FDA) framework, using the superspace
geometric approach. In \cite{D'Auria}, the supersymmetric FDA was also
analyzed in order to see whether the FDA formulation could be interpreted in
terms of an ordinary Lie superalgebra (in its dual Maurer-Cartan
formulation), introducing the notion of Cartan integrable systems. 
This was proven to be true, and the existence of a hidden superalgebra underlying the theory was presented for the first time (the authors got a dichotomic solution, consisting in two different supergroups, whose $1$-form potentials can be alternatively used to parametrize the $3$-form).

This hidden superalgebra includes, as a subalgebra, the super-Poincar\'{e}
algebra of the eleven-dimensional theory, but it also involves two extra bosonic
generators $Z^{ab}, Z^{a_1 \cdots a_5}$ ($a,b,\cdots = 0,1,\cdots 10$),
commuting with the $4$-momentum $P_a$ and having appropriate commutators
with the $D=11$ Lorentz generators $J_{ab}$. The generators that commute
with all the superalgebra but the Lorentz generators can be named ``almost
central". 

Furthermore, to close the algebra, an extra nilpotent fermionic
generator $Q^{\prime }$ must be included. In the following, we will replace
the notation in \cite{D'Auria, Ravera} as follows 
\begin{align}
Z^{ab}\rightarrow &\tilde{Z}^{ab} , \\
Z^{a_1...a_5}\rightarrow&\tilde{Z}^{a_1,...a_5}, \\
Q^{\prime }\rightarrow &\tilde{Q}',
\end{align}
in order to be able to recognize the generators of the target
superalgebra from the generators of starting one, as we have previously done along the paper.

The bosonic generators $Z^{ab}$ and $Z^{a_1 \cdots a_5}$ were understood as $p$-brane charges, sources of dual potentials \cite{Hull, Townsend}.
The role played by the extra fermionic generator $Q'$ was much less investigated, and the most relevant contributions were given first in \cite{vanHolten}, and then in particular in \cite{Bandos}, where the results in \cite{D'Auria} were further analyzed and generalized. 

Recently, in \cite{Ravera}, the authors have shown that, as the generators
of the hidden super Lie algebra span the tangent space of a supergroup
manifold, then, in the geometrical approach, the fields are naturally defined
in an enlarged manifold, corresponding to the supergroup manifold, where all
the invariances of the FDA are diffeomorphisms, generated by Lie
derivatives. 

The extra spinor $1$-form involved in the construction of the
hidden superalgebra allows, in a dynamical way, the diffeomorphisms in the
directions spanned by the almost central charges to be particular gauge
transformations, so that one obtains the ordinary superspace as the quotient
of the supergroup over the fiber subgroup of gauge transformations.

We now want to show that, with the analytic method developed in Section \ref{Method}, we are able to find the semigroup which is involved in the $S$-expansion ($0_S$-resonant-reduction) procedure for moving from the original $osp(32/1)$ Lie algebra to the hidden superalgebra underlying supergravity in eleven dimensions. 

This achievement tell us that the method described in \cite{D'Auria, Ravera}, which is based on the development of the FDA in terms of $1$-forms (the Maurer-Cartan formulation of the FDA has a dual description in terms of commutation relations of the considered Lie algebra, as it is shown in \cite{Iza3}), lead to the same result (that is to say, to the same hidden superalgebra) that can be found performing a $S$-expansion ($0_S$-resonant-reduction) procedure from $osp(32/1)$, with an appropriate semigroup. We will display the multiplication table of the mentioned
semigroup in the following, and we will see that it is the semigroup $S^{(3)}_E$, which satisfies the multiplication rules 
\begin{equation}
\lambda_\alpha \lambda_\beta = \left\{ \begin{aligned} &
\lambda_{\alpha+\beta} , \;\;\;\;\; \text{when} \; \alpha+\beta \leqslant 4,
\\ & \lambda_4 , \;\;\;\;\;\;\;\;\; \text{when} \; \alpha + \beta > 4. \end{aligned} 
\right.
\end{equation}
The same result was previously achieved in \cite{Iza1}, where the authors showed how to perform a $S$-expansion from $osp(32/1)$ to a D'Auria-Fr\'{e}-like superalgebra (with the same structure of the D'Auria-Fr\'{e} superalgebra, but with different details), using $S^{(3)}_E$ as semigroup. This analogy confirms and corroborates the analytic method developed in the present work.

In this example, we also analyze the link between $osp(32/1)$ and another superalgebra included in the dichotomic solution found in \cite{D'Auria, Ravera}, in which the translations and the fermionic generators, respectively denoted by $\tilde{P}_a$ and $\tilde{Q}$, commute. We will see that the supersymmetric Lie algebra $osp(32/1)$ and this particular hidden superalgebra are linked by a $S$-expansion ($0_S$-resonant-reduction) procedure, in which
the semigroup involved in the process is the semigroup $S^{(2)}_E$, which satisfies the multiplication rules
\begin{equation}
\lambda_\alpha \lambda_\beta = \left\{ \begin{aligned} &
\lambda_{\alpha+\beta} , \;\;\;\;\; \text{when} \; \alpha+\beta \leqslant 3,
\\ & \lambda_3 , \;\;\;\;\;\;\;\;\; \text{when} \; \alpha + \beta > 3. \end{aligned} 
\right.
\end{equation}

We now want to find the correct semigroup leading from $osp(32/1)$ to the hidden superalgebra underlying $D=11$ supergravity through our analytic method. 
Let us start from collecting the useful information coming from the starting algebra $osp(32/1)$.
The generators of $osp(32/1)$ are, with respect to the Lorentz subgroup $SO(1,10)\subset osp(32/1)$, the following set of tensors (or spinors)
\begin{equation}
\lbrace{P_a,J_{ab},Z_{a_1...a_5},Q_\alpha \rbrace},
\end{equation}
where $J_{ab}$, $P_a$, $Q_\alpha$ can be respectively interpreted as the Lorentz, translations and supersymmetry generators, and where $Z_{a_1...a_5}$ is a $5$-index skew-symmetric generator associated with the physical $A_{\mu \nu \rho}$ field appearing in $D=11$ supergravity.

Now we have to take into account the information coming from the target superalgebra, that is to say the hidden superalgebra underlying the eleven-dimensional supergravity \cite{D'Auria, Ravera}. The generators of the mentioned superalgebra are given by the set
\begin{equation}
\lbrace \tilde{P}_a, \tilde{J}_{ab}, \tilde{Z}_{ab},\tilde{Z}_{a_1...a_5},\tilde{Q}_\alpha, \tilde{Q}'_\alpha \rbrace ,
\end{equation}
where $\tilde{Z}_{ab}, \tilde{Z}_{a_1 \cdots a_5}$ are two extra bosonic generators, and where $\tilde{Q}'$ is an extra fermionic generator that controls the gauge symmetry of the theory and allows the closure of the algebra.

We perform the detailed calculations in Appendix \ref{osp}, while in the following we summarize our results.

At the end of the whole procedure, we are left with the following multiplication table:
\begin{equation}
\begin{array}{c|ccccc}
 & \lambda _{0} & \lambda _{1} & \lambda _{2} & \lambda _{3} & \lambda_{4} \\ 
\hline
\lambda _{0} & \lambda _{0} & \lambda _{1} & \lambda _{2} & \lambda _{3} & \lambda_{4}
\\ 
\lambda _{1} & \lambda _{1} & \lambda _{2} & \lambda _{3} & \lambda _{4} & \lambda_{4}
\\ 
\lambda _{2} & \lambda _{2} & \lambda _{3} & \lambda _{4} & \lambda _{4} & \lambda_{4}
\\ 
\lambda _{3} & \lambda _{3} & \lambda _{4} & \lambda _{4} & \lambda _{4} & \lambda_{4}
\\
\lambda _{4} & \lambda _{4} & \lambda _{4} & \lambda _{4} & \lambda _{4} & \lambda_{4}
\end{array}
\end{equation}
which is the multiplication table describing the semigroup $S^{(3)}_E$, that, as it was also shown in \cite{Iza1}, is exactly the semigroup leading, through a $S$-expansion procedure ($0_S$-resonant-reduction), from the $osp(32/1)$ algebra to the hidden superalgebra described in \cite{D'Auria, Ravera}. Thus, we have shown that our analytic method immediately allows us to discover that these two superalgebras can be linked through a $S$-expansion procedure ($0_S$-resonant-reduction), involving the semigroup $S^{(3)}_E$. 

We now make some consideration on the case in which 
\begin{equation}
\left[\tilde{Q}, \tilde{P}_a\right]= 0 ,
\end{equation}
that is one of the commutation relations the other superalgebra presented in \cite{D'Auria}. In this case, from the relation
\begin{equation}
\left[\tilde{P}_a, \tilde{Q} \right] = \left[\lambda_e P_a, \lambda_c Q \right]=\lambda_e \lambda_c \left[P_a,Q \right]= 0 \;\;\; \rightarrow \;\;\; \lambda_e \lambda_c = \lambda_{0_S} , 
\end{equation}
we observe that we have to fix
\begin{equation}
\lambda_b = \lambda_e ,
\end{equation}
as we have done (see Appendix \ref{osp}) in the previous case, and also
\begin{equation}
\lambda_d = \lambda_{0_S},
\end{equation}
in order to have consistent multiplication rules. Thus, following the usual procedure, we can build the multiplication table of the set $\tilde{S}$, which in this case reads
\begin{equation}
\begin{array}{c|cccc}
 & \lambda _{0} & \lambda _{1} & \lambda _{2} & \lambda _{3}  \\ 
\hline
\lambda _{0} & \lambda _{0} & \lambda _{1} & \lambda _{2} & \lambda _{3} 
\\ 
\lambda _{1} & \lambda _{1} & \lambda _{2} & \lambda _{3} & \lambda _{3}
\\ 
\lambda _{2} & \lambda _{2} & \lambda _{3} & \lambda _{3} & \lambda _{3}
\\ 
\lambda _{3} & \lambda _{3} & \lambda _{3} & \lambda _{3} & \lambda _{3}
\end{array}
\end{equation}
This is exactly the multiplication table describing the semigroup $S^{(2)}_E$, which satisfies the multiplication rules 
\begin{equation}
\lambda_\alpha \lambda_\beta = \left\{ \begin{aligned} &
\lambda_{\alpha+\beta} , \;\;\;\;\; \text{when} \; \alpha+\beta \leqslant 3,
\\ & \lambda_3 , \;\;\;\;\;\;\;\;\; \text{when} \; \alpha + \beta > 3. \end{aligned} 
\right.
\end{equation}

In Ref. \cite{Iza1}, the authors showed that $S^{(2)}_E$ is the semigroup allowing the $S$-expansion from $osp(32/1)$ to the $M$-algebra (the algebra of the $M$-theory). We have now shown that the same result is reproduced when we are dealing with a $S$-expansion ($0_S$-resonant-reduction) from $osp(32/1)$ to a particular subalgebra of the hidden superalgebra obtained in \cite{D'Auria, Ravera} (that is to say, in the case in which $\tilde{Q}$ and $\tilde{P}_a$ commute). However, this particular subalgebra can be obtained with $S$-expansion from $osp(32/1)$ only if it coincides with the $M$-algebra. In fact, in this case the extra-fermionic generator of the target hidden superalgebra goes to zero.

Thus, a strong relation between the D'Auria-Fr\'{e} superalgebra and the $M$-algebra is evident. 
Both of them, as it was shown in \cite{Iza1}, can be reached with a $S$-expansion from $osp(32/1)$, respectively with the semigroup $S^{(2)}_E$ and $S^{(3)}_E$ (and this fact furnished us another corroboration of the analytic method developed in Section \ref{Method}). 

Furthermore, in our work we have interestingly shown that the particular subalgebra (where $\tilde{P}_a$ and $\tilde{Q}$ commute) of the hidden superalgebra underlying $D=11$ supergravity is linked to $osp(32/1)$ by the semigroup $S^{(2)}_E$, only if this coincides with the $M$-algebra described in \cite{Iza1}.

Finally, we conclude our observations saying that, previously in \cite{Bandos} and later in \cite{Ravera}, the authors also found a singular solution, which, in our notation, corresponds to consider the singular limit $\tilde{Q}' \rightarrow 0$ in the target superalgebra. 
In this case, the authomorphism group of the FDA is enlarged to $Sp(32)$, and the whole procedure resembles an In\"on\"u-Wigner contraction. 
Thus, we can clearly see the existence of a strong link between the mentioned superalgebras, given by the $S$-expansion ($0_S$-resonant-reduction) and the In\"on\"u-Wigner contraction procedures. However, we will not treat the case involving the In\"on\"u-Wigner contraction in our work, and we leave these considerations for the future.   

\section{Comments and possible developments}\label{Comments}

As we have previously said in the Introduction, a fundamental task to accomplish when dealing with the $S$-expansion is to find the appropriate semigroup linking two different (super)algebras, but this procedure is not a trivial one, and usually requires a kind of ``\textit{trial and error}" process.
In this paper, we have presented an \textit{analytic method} able to give us the multiplication table(s) of the set(s) involved in an $S$-expansion process (with either resonance or $0_S$-resonant-reduction) for reaching a target Lie (super)algebra from a starting one, after having properly chosen the partitions over subspaces of the considered (super)algebras.
The analytic method described in this work gives a simple set of expressions to find the subset decomposition of the set(s) involved in the process. Then, one can use the information coming from both the initial (super)algebra and the target one, in order to write the multiplication table(s) of the set(s). At the end of the procedure, one can check associativity by hand or with a simple computational algorithm (as we have done in this work), and thus end up with the complete multiplication table(s) of the semigroup(s) involved in the process.
We have then given some interesting examples of application, starting from simple cases and ending with a particular case involving supersymmetric algebras.
With these examples, we have reproduced well known results, which have already been presented in the literature, and we have also generalized some of them. We can thus conclude that our analytic method is reliable, and it can also be used in more complicated cases.

Future work can include the study of the particular cases in which the number of subspaces partitions of the target (super)algebra is \textit{different} from the number of subspaces partition of the starting one, and the extension (and generalization) of our analytic method to the case of infinite algebras and semigroups.
Furthermore, our analytic procedure can be used in future works for understanding the possible links that could exist between different (super)algebras (also in higher dimensional cases) that have not yet been analyzed in the $S$-expansion context.

\section{Acknowledgment}

We have benefited of stimulating discussions with L. Andrianopoli, R. D'Auria, and M. Trigiante, and we thank them for a critical reading of the manuscript.
The authors also wish to thank R. Caroca, P.K. Concha, N. Merino, E.K. Rodr\'{\i}guez, and P. Salgado for the enlightening suggestions.

Two of the authors (M. C. Ipinza and D. M. Pe\~{n}afiel) were supported by grants from the
\textit{Comisi\'{o}n Nacional de Investigaci\'{o}n Cient\'{i}fica y Tecnol\'{o}gica} CONICYT and from the Universidad de Concepci\'{o}n, Chile.

\appendix

\section{Detailed calculations for moving from the Bianchi Type I algebra (BTI) to the Bianchi Type II algebra (BTII)}\label{Bianchi}

The BTI and BTII Lie algebras have two and three generators, respectively.
The only commutator different from zero for the BTI algebra is
\begin{align}
\left[X_1,X_2\right]=X_1,
\end{align}
where $X_1$ and $X_2$ are the generators of the BTI algebra. For the BTII algebra we have
\begin{align}
\left[Y_1,Y_2\right]=&0, \\
\left[Y_1,Y_3\right]=&0, \\
\left[Y_2,Y_3\right]=&Y_1,
\end{align}
where $Y_1$, $Y_2$, and $Y_3$ are the generators of the BTII algebra.

Let us consider the following subspaces partition for the BTI algebra:
\begin{align}
\left[V_0,V_0\right]\subset & V_0 , \\ 
\left[V_0,V_1\right]\subset & V_0 \oplus V_1 , \\ 
\left[V_1,V_1\right]\subset & V_0 , 
\end{align}
where we have set $V_0=\lbrace0\rbrace \cup \lbrace X_2\rbrace$, and $V_1=\lbrace X_1\rbrace$.
Similarly, we can write the subspaces partition for the target BTII algebra:
\begin{align}
\left[\tilde{V}_0,\tilde{V}_0\right]\subset & \tilde{V}_0 , \\ 
\left[\tilde{V}_0,\tilde{V}_1\right]\subset & \tilde{V}_0 \oplus \tilde{V}_1 , \\ 
\left[\tilde{V}_1,\tilde{V}_1\right]\subset & \tilde{V}_0 , 
\end{align}
where we have denoted with $\tilde{V}_A$, $A=0,1$, the subspaces related to the target algebra and where we have defined $\tilde{V}_0=\lbrace0\rbrace \cup \lbrace Y_3\rbrace$, and $\tilde{V}_1=\lbrace Y_1, \; Y_2\rbrace$. Let us observe that, in this way, we have the same partition structure both for the initial algebra and for the target one.

We now follow the steps described in the analytic procedure of Section \ref{Method}, in order to obtain the possible abelian set(s) (with respect to the chosen partitions) leading from the BTI algebra to the BTII one.

First of all, we solve the system (\ref{system}), that in this case reads
\begin{equation}
\left\{
\begin{aligned}
1=& \;1\cdot \left(\Delta_0\right), \\
2=& \; 1\cdot \left(\Delta_1\right), \\
\tilde{P}=& \; \Delta_0 + \Delta_1 +1, 
\end{aligned} \right.
\end{equation}
since 
\begin{align}
dim(\tilde{V}_0)=1, \;\;\; dim(V_0)=1, \\ 	
dim(\tilde{V}_1)=2, \;\;\; dim(V_1)=1,
\end{align}
neglecting the zero element of the subspaces $V_0$ and $\tilde{V}_0$. Here we have denoted, as usual, with $\Delta_A$, $A=0,1$, the cardinality of the subsets $S_{\Delta_A}$ associated with the subspace $A$, \textit{i.e.} the number of elements in $S_{\Delta_A}$. Solving the system above, we get the unique solution
\begin{equation}
\tilde{P}=4, \;\;\; \Delta_0 = 1, \;\;\; \Delta_1 =2.
\end{equation}

Now, since $\tilde{S}=\lbrace S_{1_0}\rbrace \sqcup \lbrace S_{2_1} \rbrace \cup \lbrace \lambda_{0_S}\rbrace$, where $\lambda_{0_S}$ is the zero element of the set $\tilde{S}$, we can write the following subset decomposition structure of the set $\tilde{S}$:
\begin{align}\label{sgrsubset}
S_{1_0} = & \lbrace \lambda_a\rbrace  , \nonumber \\
S_{2_1} = & \lbrace \lambda_b , \; \lambda_c \rbrace .
\end{align}
Here the index $a,b,c$ identify general elements of the set $\tilde{S}$, and they are not running index. We do not yet identify them with numbers, because there still exists the possibility of having the same element in different subsets. 

The next step consists in finding the multiplication rules between the elements of each subset in (\ref{sgrsubset}). Thus we write the adjoint representation of the BTI algebra with respect to the subspaces partition:
\begin{equation}
(C)_{0B}^{C}=\begin{pmatrix}
(C)_{00}^{0} & 0 \\ 
0 & (C)_{01}^{1}
\end{pmatrix} , \;\;\; (C)_{1B}^{C}=\begin{pmatrix}
0 & (C)_{10}^{1} \\ 
(C)_{11}^{0} & 0
\end{pmatrix},
\end{equation}
where the index $B,C$ can assume the values $0$ or $1$, labeling the different subspaces involved in the partition of the algebra. Thus, we are now able to write the relations (\ref{eqsemigrold}), which, in this case, read
\begin{equation}
\begin{aligned}
& \left[ \left(  S_{ 1_{0}} \otimes V_{0}\right) \oplus \left(\lbrace \lambda_{0_S}\rbrace \otimes V_{0} \right) ,\left(S_{1 _{0}} \otimes V_{0}\right) \oplus \left(\lbrace \lambda_{0_S}\rbrace \otimes V_{0} \right)\right] = \\
& = \left( K_{\left(  1_{0}\right) \left(  1_{0}\right) }^{\left(  1_{0}\right) }\left( C\right)_{00}^{0}\right) \left(  S _{ 1 _{0}}\otimes V_{0}\right)\oplus \left(\lbrace \lambda_{0_S}\rbrace \otimes V_{0} \right), \\
& \left[ \left(  S_{2 _{1}} \otimes V_{1}\right)\oplus \left(\lbrace \lambda_{0_S}\rbrace \otimes V_{1} \right) ,\left( S_{1 _{0}} \otimes V_{0}\right)\oplus \left(\lbrace \lambda_{0_S}\rbrace \otimes V_{0} \right) \right] = \\
& = \left( K_{\left(  2 _{1}\right) \left(  1_{0}\right) }^{\left(  2_{1}\right) }\left( C\right)_{10}^{1}\right) \left( S _{ 2 _{1}}\otimes V_{1}\right)\oplus \left(\lbrace \lambda_{0_S}\rbrace \otimes V_{1} \right), \\
& \left[ \left(  S_{2_{1}}\otimes V_{1}\right) \oplus \left(\lbrace \lambda_{0_S}\rbrace \otimes V_{1} \right),\left( S_{2_{1}} \otimes V_{1}\right) \oplus \left(\lbrace \lambda_{0_S}\rbrace \otimes V_{1} \right)\right] = \\
& = \left( K_{\left(  2_{1}\right) \left(  2_{1}\right) }^{\left(   1_{0}\right) }\left( C\right)_{11}^{0}\right) \left(  S _{ 1 _{0}} \otimes V_{0}\right)\oplus \left(\lbrace \lambda_{0_S}\rbrace \otimes V_{0} \right).
\end{aligned}
\end{equation}
These relations (which can also be rewritten in a simpler form, such as the one in (\ref{eqsemigr})) give us a first view on the possibilities allowed by the multiplication table of the set $\tilde{S}$. In fact, we can now write
\begin{align}
S_{1_0}  \cdot  S_{1_0}  \subset & S_{1_0} \cup \lbrace \lambda_{0_S}\rbrace,\\
S_{2_1} \cdot  S_{1_0}  \subset & S_{2_1} \cup \lbrace \lambda_{0_S}\rbrace,\\
S_{2_1} \cdot S_{2_1}  \subset & S_{1_0} \cup \lbrace \lambda_{0_S}\rbrace,
\end{align}
where we have taken into account the presence of the zero element $\lambda_{0_S}$ of the set $\tilde{S}$. Thus, we are now able to write the possible multiplication rules between the elements of the set $\tilde{S}$, namely
\begin{align}
\lambda_{a}  \lambda_{a}= &\lambda_{a,0_S},\\
\lambda_{b,c} \lambda_{a}= &\lambda_{b,c,0_S},\\
\lambda_{b,c} \lambda_{b,c}=& \lambda_{a,0_S},
\end{align}
where we have already taken into account the triviality of the multiplications rules
\begin{align}
\lambda_{0_S}\lambda_{0_S}= &\lambda_{0_S}, \\
\lambda_{0_S}\lambda_{a,b,c}= &\lambda_{0_S}. 
\end{align}

We have exhausted the information coming from the initial algebra, thus we now use the information coming from the target one, in order to build up the complete multiplication table of the set $\tilde{S}$. 

We proceed by writing the relations between the $S$-expanded generators of the initial BTI algebra and the generators of the target BTII one, according to the usual $S$-expansion procedure described \cite{Iza1}. According to the identification criterion presented in Subsection \ref{IdCr}, we can perform the identification 
\begin{align}
\lambda_a X_2=&Y_3 \\
\lambda_b X_1=&Y_1\\
\lambda_c X_1=&Y_2.
\end{align}
We now write the commutators of the target BTII algebra in terms of the commutators between the $S$-expanded, resonant-reduced generators of the BTI one:
\begin{align}
\left[Y_2,Y_3\right]=&Y_1, \nonumber \\
\left[\lambda_c X_1,\lambda_a X_2\right]=&\lambda_b X_1, \nonumber\\
\lambda_c\lambda_a \left[X_1,X_2\right]=&\lambda_b X_1 \label{mult}.
\end{align}
Since for the BTI algebra we have $\left[X_1,X_2\right]=X_1$, from equation (\ref{mult}) we obtain
\begin{equation}
\lambda_c \lambda_a =\lambda_b .
\end{equation}
This simple analysis can be performed in order to find the correct multiplication rules between the elements of the set $\tilde{S}$, thus we proceed in this way, computing the other commutators and factorizing the product between the elements of $\tilde{S}$, in order to end up with the complete multiplication table.

Let us observe that the commutator
\begin{equation}
\left[ Y_1, Y_2 \right]=0
\end{equation}
does not give us any further information about the multiplication rule between $\lambda_b$ and $\lambda_c$, due to the fact that, when we write it in terms of the commutator between $S$-expanded generators and we factorize the product between the elements of $\tilde{S}$, we are left with $\lambda_b \lambda_c \left[X_1,X_1\right]=0$, which reproduces a trivial identity, since $\left[X_1,X_1\right]=0$ in the BTI algebra.
On the other hand, from the study of the last commutator we have to consider, we get
\begin{align}
\left[Y_1,Y_3\right]=&0, \nonumber \\
\left[\lambda_bX_1,\lambda_aX_2\right]=&0, \nonumber\\
\lambda_b\lambda_a \left[X_1,X_2\right]=&0 \label{mult2}.
\end{align}
Since, from the initial BTI algebra, we know that $\left[X_1,X_2\right]=X_1\neq 0$, from equation (\ref{mult2}) we clearly see that the the zero element $\lambda_{0_S}$ is naturally involved in the procedure, and we can finally write
\begin{equation}
\lambda_b \lambda_a = \lambda_{0_S} .
\end{equation}

Summarizing, by following the identification criterion described in Subsection \ref{IdCr}, we have obtained the multiplication rules
\begin{align}
\lambda_c  \lambda_a =& \lambda_b, \\
\lambda_b  \lambda_a =& \lambda_{0_S} .
\end{align}
Let us notice that these multiplication rules are consistent with those previously obtained along the procedure, when we have exploited the information coming from the initial BTI algebra. 

We are now able to write the following multiplication tables for the sets $\tilde{S}$'s involved in the procedure:
\begin{equation}
\begin{array}{c|cccc}
 &\lambda_a & \lambda_b & \lambda_c &\lambda_{0_S} \\
\hline
\lambda_a & \lambda_{a,0_S} & \lambda_{0_S}& \lambda_{b}&\lambda_{0_S}\\
\lambda_b & \lambda_{0_S}& \lambda_{a,0_S} & \lambda_{a,0_S}&\lambda_{0_S} \\
\lambda_c & \lambda_b & \lambda_{a,0_S}& \lambda_{a,0_S}&\lambda_{0_S}\\
\lambda_{0_S} &\lambda_{0_S} &\lambda_{0_S} &\lambda_{0_S} &\lambda_{0_S}
\end{array}
\end{equation}
These are the multiplication tables of the possible sets $\tilde{S}$'s (with respect to the chosen partitions) involved in the $S$-expansion, $0_S$-resonant-reduced procedure for moving from the BTI algebra to the BTII one.

\section{Detailed calculations for reaching $iso(2,1)$, starting from $so(2,2)$}\label{soiso}

Both $iso(2,1)$ and $so(2,2)$ have six generators. 
The commutation relations between the generators of the starting $so(2,2)$ algebra are
\begin{align}
& \left[ J_{i},J_{j}\right] = \epsilon_{ijk} J^{k}, \\
& \left[ J_{i},P_{j}\right] = \epsilon_{ijk} P^{k}, \\
& \left[ P_{i},P_{j}\right] = \epsilon_{ijk} J^{k},
\end{align}
where $i,j,k,...=1,2,3$, and the commutation relations between the
generators of the target algebra $iso(2,1)$ read 
\begin{align}
& \left[ \tilde{J}_{i},\tilde{J}_{j}\right] = \epsilon_{ijk} \tilde{J}^{k}, \\
& \left[ \tilde{J}_{i},\tilde{P}_{j}\right] = \epsilon_{ijk} \tilde{P}^{k}, \\
& \left[ \tilde{P}_{i},\tilde{P}_{j}\right] = 0,
\end{align}%
where, again, $i,j,k,...=1,2,3$. 

We can write the following subspaces partition: 
\begin{align}
\left[ V_{0},V_{0}\right] \subset & V_{0}, \\
\left[ V_{0},V_{1}\right] \subset & V_{1}, \\
\left[ V_{1},V_{1}\right] \subset & V_{0}
\end{align}%
for $so(2,2)$, where we have set $V_{0}=\{J_{i}\}$ and $V_{1}=\{P_{i}\}$, and 
\begin{align}
\left[ \tilde{V}_{0},\tilde{V}_{0}\right] \subset & \tilde{V}_{0}, \\
\left[ \tilde{V}_{0},\tilde{V}_{1}\right] \subset & \tilde{V}_{1}, \\
\left[ \tilde{V}_{1},\tilde{V}_{1}\right] \subset & \tilde{V}_{0}
\end{align}%
for $iso(2,1)$, where, in analogy to what we have done for the initial
algebra, $V_{0}=\lbrace 0 \rbrace \oplus \{\tilde{J}_{i}\}$ and $V_{1}=\{\tilde{P}_{i}\}$. Thus, we can now write: 
\begin{align}
dim(V_{0})=& 3, \\
dim(V_{1})=& 3
\end{align}%
for the $so(2,2)$ algebra, and, similarly, 
\begin{align}
dim(\tilde{V}_{0})=& 3, \\
dim(\tilde{V}_{1})=& 3
\end{align}%
for the $iso(2,2)$ one. Now we have all the information we need to know for
proceeding. Thus, we move to the study of the system (\ref{system}), which, in this case, reads 
\begin{equation}
\left\{
\begin{aligned}
3=& \;3\cdot \left( \Delta_0 \right), \\
3=& \;3\cdot \left( \Delta_1 \right), \\
\tilde{P}=& \; \Delta _{1}+\Delta _{0}+1.
\end{aligned} \right.
\end{equation}
This system admits the unique solution 
\begin{equation}
\tilde{P}=3,\;\;\;\Delta _{0}=1,\;\;\;\Delta _{1}=1.
\end{equation}
Thus, we now know that the set $\tilde{S}=\{{S_{1_0}}\}\sqcup \{{S_{1_1}}\} \cup \lbrace \lambda_{0_S}\rbrace$, involved in the $0_{S}$-resonant-reduction process to reach the algebra $iso(2,1)$ starting from $so(2,2)$, must have the following subset decomposition structure: 
\begin{align}
S_{1_{0}}=& \lbrace{\lambda _{a}\rbrace}, \\
S_{1_{1}}=& \lbrace{\lambda _{b}\rbrace}.
\end{align}

We can now write down the adjoint
representation of $so(2,2)$ with respect to the subspaces partition: 
\begin{equation}
(C)_{0B}^{C}=
\begin{pmatrix}
(C)_{00}^{0} & 0 \\ 
0 & (C)_{01}^{1}
\end{pmatrix}
,\;\;\;(C)_{1B}^{C}=
\begin{pmatrix}
0 & (C)_{10}^{1} \\ 
(C)_{11}^{0} & 0
\end{pmatrix},
\end{equation}
where the index $B,C$ can assume the values $0$ or $1$, labeling the different subspaces,
and, subsequently, we are now able to write relations of the type (\ref{eqsemigrold}), which, in this case, read
\begin{equation}
\begin{aligned}
& \left[ \left(  S_{1_{0}} \otimes V_{0}\right)\oplus \left(\lbrace \lambda_{0_S}\rbrace \otimes V_{0}\right) ,\left( S_{1_{0}} \otimes V_{0}\right) \oplus \left(\lbrace \lambda_{0_S}\rbrace \otimes V_{0}\right)\right]  = \\
& = \left( K_{\left(1_{0}\right) \left( 1_{0}\right) }^{\left( 1_{0}\right) }\left( C\right)
_{00}^{0}\right) \left( S_{1_{0}} \otimes V_{0}\right)\oplus \left(\lbrace \lambda_{0_S}\rbrace \otimes V_{0}\right) , \\
& \left[ \left( S_{1_{1}} \otimes V_{1}\right) \oplus \left(\lbrace \lambda_{0_S}\rbrace \otimes V_{1}\right),\left( S_{1_{0}} \otimes V_{0}\right) \oplus \left(\lbrace \lambda_{0_S}\rbrace \otimes V_{0}\right)\right]  = \\
& = \left( K_{\left(1_{1}\right) \left( 1_{0}\right) }^{\left( 1_{1}\right) }\left( C\right)
_{10}^{1}\right) \left(  S_{1_{1}} \otimes V_{1}\right)\oplus \left(\lbrace \lambda_{0_S}\rbrace \otimes V_{1}\right) , \\
& \left[ \left(  S_{1_{1}} \otimes V_{1}\right) \oplus \left(\lbrace \lambda_{0_S}\rbrace \otimes V_{1}\right),\left( S_{1_{1}}  \otimes V_{1}\right) \oplus \left(\lbrace \lambda_{0_S}\rbrace \otimes V_{1}\right)\right] = \\
& = \left( K_{\left(
1_{1}\right) \left( 1_{1}\right) }^{\left( 1_{0}\right) }\left( C\right)
_{11}^{0}\right) \left(  S_{1_{0}} \otimes V_{0}\right) \oplus \left(\lbrace \lambda_{0_S}\rbrace \otimes V_{0}\right).
\end{aligned}
\end{equation}
These relations (which can also be rewritten in a simpler form of the type (\ref{eqsemigr})) lead us to a first view on the multiplication rules between the
subsets of the set $\tilde{S}$, namely  
\begin{align}
S_{1_{0}}\cdot  S_{1_{0}}\subset & S_{1_{0}}\cup \lbrace \lambda_{0_S} \rbrace, \\
S_{1_{1}}  \cdot S_{1_{0}} \subset & S_{1_{1}}\cup \lbrace \lambda_{0_S} \rbrace, \\
S_{1_{1}}\cdot  S_{1_{1}}\subset & S_{1_{0}}\cup \lbrace \lambda_{0_S} \rbrace,
\end{align}
where we have taken into account the presence of the zero element $\lambda_{0_S}$ of the set $\tilde{S}$.

In terms of the elements of the subsets, we can now write
\begin{align}
\lambda _{a}\lambda _{a}=& \lambda _{a,0_S}, \\
\lambda _{b}\lambda _{a}=& \lambda _{b,0_S}, \\
\lambda _{b}\lambda _{b}=& \lambda _{a,0_S},
\end{align}
where we have already taken into account the triviality of the multiplications rules
\begin{align}
\lambda_{0_S}\lambda_{0_S}= &\lambda_{0_S}, \\
\lambda_{0_S}\lambda_{a,b}= &\lambda_{0_S}. 
\end{align}

Then, by exploiting the information coming from the target algebra, we will be able to fix the degeneracy present in the above multiplication rules.

To this aim, let us perform the following associations between the $S$-expanded generators of the starting algebra and the generators of the target one (according to the identification criterion described in Subsection \ref{IdCr}): 
\begin{align}
\lambda _{a}J_{i}=& \tilde{J}_{i}, \\
\lambda _{b}P_{i}=& \tilde{P}_{i}.
\end{align}
Now we can write, according to the $S$-expansion procedure \cite{Iza1}, the commutators of
the target $iso(2,1)$ algebra in terms of the commutators between the $S$-expanded generators of the $so(2,2)$ one: 
\begin{align}
\left[ \tilde{J}_{i},\tilde{J}_{j}\right] \propto & \tilde{J}_{k},  \notag \\
\left[ \lambda _{a}J_{i},\lambda _{a}J_{j}\right] \propto & \lambda
_{a}J_{k},  \notag \\
\lambda _{a}\lambda _{a}\left[ J_{i},J_{j}\right] \propto & \lambda
_{a}J_{k}.  \label{multJ}
\end{align}
Since $\left[ J_{i},J_{j}\right] \propto J_{k}$, equation (\ref{multJ})
tells us that 
\begin{equation}
\lambda _{a}\lambda _{a}=\lambda _{a}.
\end{equation}
Similarly, since $\left[ J_{i},P_{j}\right] \propto P_{k}$, the analysis of 
\begin{align}
\left[ \tilde{J}_{i},\tilde{P}_{j}\right] \propto & \tilde{P}_{k},  \notag \\
\left[ \lambda _{a}J_{i},\lambda _{b}P_{j}\right] \propto & \lambda
_{b}P_{k},  \notag \\
\lambda _{a}\lambda _{b}\left[ J_{i},P_{j}\right] \propto & \lambda _{b}P_{k}
\label{multJP}
\end{align}
gives us  
\begin{equation}
\lambda _{a}\lambda _{b}=\lambda _{b}.
\end{equation}
Finally, from 
\begin{align}
\left[ \tilde{P}_{i},\tilde{P}_{j}\right] =0,  \notag \\
\left[ \lambda _{b}P_{i},\lambda _{b}P_{j}\right] =0,  \notag \\
\lambda _{b}\lambda _{b}\left[ P_{i},P_{j}\right] =0,  \label{multP}
\end{align}
we can write 
\begin{equation}
\lambda _{b}\lambda _{b}=\lambda _{0_S},
\end{equation}
since $\left[ P_{i},P_{j}\right] \propto J_{k} \neq 0$.

Summarizing, we are left with the multiplication rules 
\begin{align}
& \lambda _{a}\lambda _{a}=\lambda _{a}, \\
& \lambda _{a}\lambda _{b}=\lambda _{b}, \\
& \lambda _{b}\lambda _{b}=\lambda _{0_S}.
\end{align}
In this way, we have completely fixed the degeneracy appearing in the
multiplication rules between the elements of the set $\tilde{S}$ and we are finally able to write the multiplication table of the set $\tilde{S}$, which reads
\begin{equation}
\begin{array}{c|ccc}
& \lambda _{a} & \lambda _{b} & \lambda _{0_S} \\ \hline
\lambda _{a} & \lambda _{a} & \lambda _{b} & \lambda _{0_S} \\ 
\lambda _{b} & \lambda _{b} & \lambda _{0_S} & \lambda _{0_S} \\ 
\lambda _{0_S} & \lambda _{0_S} & \lambda _{0_S} & \lambda _{0_S}
\end{array}
\label{tabla123v2general}
\end{equation}
Here we have completed the multiplication table with the zero element $\lambda _{0_S}$, by exploiting the definition $\lambda _{0_S}\lambda _{a,b}=\lambda _{0_S}$.

After having performed the identification 
\begin{align}
& \lambda _{a}\leftrightarrow \lambda _{0}, \\
& \lambda _{b}\leftrightarrow \lambda _{1}, \\
& \lambda _{0_S}\leftrightarrow \lambda _{2},
\end{align}
we can write the above multiplication table as follow:
\begin{equation}
\begin{array}{c|ccc}
& \lambda _{0} & \lambda _{1} & \lambda _{2} \\ \hline
\lambda _{0} & \lambda _{0} & \lambda _{1} & \lambda _{2} \\ 
\lambda _{1} & \lambda _{1} & \lambda _{2} & \lambda _{2} \\ 
\lambda _{2} & \lambda _{2} & \lambda _{2} & \lambda _{2}
\end{array}
\label{tabla123v2}
\end{equation}
Table (\ref{tabla123v2}) is an abelian and associative one (the check for associativity can be performed either by hand or using a simple computational algorithm), and thus it describes the semigroup leading from the $so(2,2)$ algebra to the $iso(2,1)$ one (with respect to the partitions over subspaces that we have chosen).

\section{Detailed calculations for moving from the Anti-de Sitter ($AdS$) algebra to the Maxwell algebra}\label{Maxwell}

The $AdS$ Lie algebra has ten generators, while the Maxwell algebra ($\mathcal{M}$) counts sixteen generators. 
The generators of the $AdS$ algebra are $\lbrace J_{ab},P_a \rbrace$, and they satisfy the
commutation relations 
\begin{align}
& \left[ J_{ab}, J_{cd}\right] =
\eta_{bc}J_{ad}-\eta_{ac}J_{bd}-\eta_{bd}J_{ac}+\eta_{ad}J_{bc}, \\
& \left[J_{ab},P_c \right] = \eta_{bc}P_a - \eta_{ac}P_b , \\
& \left[P_a, P_b \right] = J_{ab}.
\end{align}

The generators of the Maxwell algebra $\mathcal{M}$ are $\{\tilde{J}_{ab},\tilde{P}_{a},\tilde{Z}_{ab}\}$, and they satisfy the following commutation relations: 
\begin{align}
& \left[ \tilde{J}_{ab},\tilde{Z}_{cd}\right] =\eta _{bc}\tilde{Z}_{ad}-\eta
_{ac}\tilde{Z}_{bd}-\eta _{bd}\tilde{Z}_{ac}+\eta _{ad}\tilde{Z}_{bc}, \\
& \left[ \tilde{Z}_{ab},\tilde{P}_{a}\right] =0, \\
& \left[ \tilde{Z}_{ab},\tilde{Z}_{cd}\right] =0, \\
& \left[ \tilde{J}_{ab},\tilde{J}_{cd}\right] =\eta _{bc}\tilde{J}_{ad}-\eta
_{ac}\tilde{J}_{bd}-\eta _{bd}\tilde{J}_{ac}+\eta _{ad}\tilde{J}_{bc}, \\
& \left[ \tilde{J}_{ab},\tilde{P}_{c}\right] =\eta _{bc}\tilde{P}_{a}-\eta
_{ac}\tilde{P}_{b}, \\
& \left[ \tilde{P}_{a},\tilde{P}_{b}\right] =\tilde{Z}_{ab}.
\end{align}
The subspace structure of the $AdS$ Lie algebra can be written as
\begin{align}
\left[V_0,V_0\right] \subset & V_0 , \\
\left[V_0,V_1\right] \subset & V_1, \\
\left[V_1,V_1 \right] \subset & V_0,
\end{align}
where $V_0= \lbrace{J_{ab} \rbrace}$ and $V_1=\lbrace P_a\rbrace$.

The subspace structure of the Maxwell algebra $\mathcal{M}$, in analogy to what
we have done for the $AdS$ algebra, may be written as 
\begin{align}
\left[ \tilde{V}_{0},\tilde{V}_{0}\right] \subset & \tilde{V}_{0}, \\
\left[ \tilde{V}_{0},\tilde{V}_{1}\right] \subset & \tilde{V}_{0}\oplus 
\tilde{V}_{1}, \\
\left[ \tilde{V}_{1},\tilde{V}_{1}\right] \subset & \tilde{V}_{0},
\end{align}
where $\tilde{V}_{0}=\{0\}\cup \{{\tilde{J}_{ab},\tilde{Z}_{ab}\}}$ and $%
\tilde{V}_{1}=\{\tilde{P}_{a}\}$. Thus, we have 
\begin{align}
dim(V_{0})=& 6, \\
dim(V_{1})=& 4
\end{align}%
for the $AdS$ algebra, and 
\begin{align}
dim(\tilde{V}_{0})=& 12, \\
dim(\tilde{V}_{1})=& 4
\end{align}%
for the Maxwell one. 

Now, we can solve the usual system (\ref{system}), which
in this case reads 
\begin{equation}
\left\{
\begin{aligned}
&12=  6\cdot \left(\Delta_0 \right) , \\
& \; 4= 4\cdot \left(\Delta_1 \right) , \\
& \tilde{P} =\Delta _{0}+\Delta _{1}+1.
\end{aligned} \right.
\end{equation}
This system has the unique solution 
\begin{equation}
\tilde{P}=4,\;\;\;\Delta _{0}=2,\;\;\;\Delta _{1}=1 .
\end{equation}
Thus, we now know that the set $\tilde{S}=\{{S_{2_{0}}}\}\sqcup \{{S_{1_{1}}}\}\cup\lbrace \lambda_{0_S}\rbrace$, involved in the $0_S$-resonant-reduction procedure to reach the Maxwell algebra $\mathcal{M}$ starting from the $AdS$ Lie one, must have the following subset decomposition structure: 
\begin{align}
{S_{2_{0}}}=& \lbrace{\lambda _{a},\lambda _{b}\rbrace}, \\
{S_{1_{1}}}=& \lbrace{\lambda _{c}\rbrace} .
\end{align}

By following the usual method (see Section \ref{Method}), now we write down the adjoint
representation of the $AdS$ algebra with respect to the subspaces
partition, namely 
\begin{equation}
(C)_{0B}^{C}=%
\begin{pmatrix}
(C)_{00}^{0} & 0 \\ 
0 & (C)_{01}^{1}%
\end{pmatrix}%
,\;\;\;(C)_{1B}^{C}=%
\begin{pmatrix}
0 & (C)_{10}^{1} \\ 
(C)_{11}^{0} & 0%
\end{pmatrix}%
,
\end{equation}%
where the index $B,C$ can take the values $0$ or $1$, labeling the different subspaces.

Consequently, we can write relations (\ref{eqsemigrold}) as follows: 
\begin{equation}
\begin{aligned}
& \left[ \left(  S_{2_{0}} \otimes V_{0}\right)\oplus \left(\lbrace \lambda_{0_S} \rbrace \otimes V_0 \right) ,\left( S_{2_{0}} \otimes V_{0}\right)\oplus \left(\lbrace \lambda_{0_S} \rbrace \otimes V_0 \right) \right] = \\
& = \left( K_{\left(2_{0}\right) \left( 2_{0}\right) }^{\left( 2_{0}\right) }\left( C\right)
_{00}^{0}\right) \left(  S_{2_{0}} \otimes V_{0}\right)\oplus \left(\lbrace \lambda_{0_S} \rbrace \otimes V_0 \right) , \\
& \left[ \left(  S_{1_{1}} \otimes V_{1}\right) \oplus \left(\lbrace \lambda_{0_S} \rbrace \otimes V_1 \right),\left( S_{2_{0}} \otimes V_{0}\right) \oplus \left(\lbrace \lambda_{0_S} \rbrace \otimes V_0 \right)\right] = \\
& = \left( K_{\left(1_{1}\right) \left( 2_{0}\right) }^{\left( 1_{1}\right) }\left( C\right)
_{10}^{1}\right) \left(  S_{1_{1}} \otimes V_{1}\right)\oplus \left(\lbrace \lambda_{0_S} \rbrace \otimes V_1 \right)  , \\
& \left[ \left( S_{1_{1}} \otimes V_{1}\right)\oplus \left(\lbrace \lambda_{0_S} \rbrace \otimes V_1 \right) ,\left( S_{1_{1}} \otimes V_{1}\right) \oplus \left(\lbrace \lambda_{0_S} \rbrace \otimes V_1 \right)\right] = \\
& = \left( K_{\left(
1_{1}\right) \left( 1_{1}\right) }^{\left( 2_{0}\right) }\left( C\right)
_{11}^{0}\right) \left(  S_{2_{0}} \otimes V_{0}\right) \oplus \left(\lbrace \lambda_{0_S} \rbrace \otimes V_0 \right) .
\end{aligned}
\end{equation}
These relations (which can be rewritten in the simpler form described in (\ref{eqsemigr})) tell us the composition rules of the subsets of the set $\tilde{S}$, which  read
\begin{align}
S_{2_{0}}\cdot S_{2_{0}}\subset & S_{2_{0}}\cup \lbrace \lambda_{0_S} \rbrace, \\
S_{2_{0}}\cdot S_{1_{1}}\subset & S_{1_{1}}\cup \lbrace \lambda_{0_S} \rbrace, \\
S_{1_{1}} \cdot  S_{1_{1}}\subset & S_{2_{0}}\cup \lbrace \lambda_{0_S} \rbrace ,
\end{align}
where we have explicitly taken into account the presence of the zero element $\lambda_{0_S}$.

Thus, we can write the following multiplication rules for the elements of the set $\tilde{S}$:
\begin{align}
\lambda _{a,b}\lambda _{a,b}=& \lambda _{a,b,0_S}, \\
\lambda _{a,b}\lambda _{c}=& \lambda _{c,0_S}, \\
\lambda _{c}\lambda _{c}=& \lambda _{a,b,0_S},
\end{align}
where we have already taken into account the triviality of the multiplications rules
\begin{align}
\lambda_{0_S}\lambda_{0_S}= &\lambda_{0_S}, \\
\lambda_{0_S}\lambda_{a,b,c}= &\lambda_{0_S}. 
\end{align}

Then, by exploiting the information coming from the target algebra,
we will be able to completely fix the degeneracy still appearing in the above multiplication
rules.

Thus, we write the generators of the Maxwell algebra $\mathcal{M}$ in terms
of the $S$-expanded generators of the $AdS$ Lie algebra, performing the
following identification (according to the identification criterion of Subsection \ref{IdCr}): 
\begin{align}
\lambda _{a}J_{ab}=& \tilde{J}_{ab}, \\
\lambda _{b}J_{ab}=& \tilde{Z}_{ab}, \\
\lambda _{c}P_{a}=& \tilde{P}_{a}.
\end{align}%
Consequently, according to the $S$-expansion procedure described in \cite{Iza1}, we can write the commutators of
the target Maxwell algebra in terms of the commutators between the $S$-expanded generators of the $AdS$ one, taking into account the commutation relations of the initial $AdS$ Lie algebra (in the following, we will
neglect, for simplicity, the index labeling the generators, since we just
need to exploit the commutators structure). 

We get
\begin{align}
\left[ \tilde{J},\tilde{J}\right] \propto & \tilde{J},  \notag \\
\left[ \lambda _{a}J,\lambda _{a}J\right] \propto & \lambda _{a}J,  \notag \\
\lambda _{a}\lambda _{a}\left[ J,J\right] \propto & \lambda
_{a}J\;\;\;\Rightarrow \;\;\;\lambda _{a}\lambda _{a}=\lambda _{a};
\end{align}
\begin{align}
\left[ \tilde{J},\tilde{P}\right] \propto & \tilde{P},  \notag \\
\left[ \lambda _{a}J,\lambda _{c}P\right] \propto & \lambda _{c}P,  \notag \\
\lambda _{a}\lambda _{c}\left[ J,P\right] \propto & \lambda
_{c}P\;\;\;\Rightarrow \;\;\;\lambda _{a}\lambda _{c}=\lambda _{c};
\end{align}
\begin{align}
\left[ \tilde{J},\tilde{Z}\right] \propto & \tilde{Z},  \notag \\
\left[ \lambda _{a}J,\lambda _{b}J\right] \propto & \lambda _{b}J,  \notag \\
\lambda _{a}\lambda _{b}\left[ J,J\right] \propto & \lambda
_{b}J\;\;\;\Rightarrow \;\;\;\lambda _{a}\lambda _{b}=\lambda _{b};
\end{align}
\begin{align}
\left[ \tilde{P},\tilde{P}\right] \propto & \tilde{Z},  \notag \\
\left[ \lambda _{c}P,\lambda _{c}P\right] \propto & \lambda _{b}J,  \notag \\
\lambda _{c}\lambda _{c}\left[ P,P\right] \propto & \lambda
_{b}J\;\;\;\Rightarrow \;\;\;\lambda _{c}\lambda _{c}=\lambda _{b};
\end{align}
\begin{align}
\left[ \tilde{Z},\tilde{P}\right] =& 0,  \notag \\
\left[ \lambda _{b}J,\lambda _{c}P\right] =& 0,  \notag \\
\lambda _{b}\lambda _{c}\left[ J,P\right] =& 0\;\;\;\Rightarrow
\;\;\;\lambda _{b}\lambda _{c}=\lambda _{0_S},
\end{align}
since $\left[ J,P\right] =P\neq 0$; 
\begin{align}
\left[ \tilde{Z},\tilde{Z}\right] =& 0,  \notag \\
\left[ \lambda _{b}J,\lambda _{b}J\right] =& 0,  \notag \\
\lambda _{b}\lambda _{b}\left[ J,J\right] =& 0\;\;\;\Rightarrow
\;\;\;\lambda _{b}\lambda _{b}=\lambda _{0_S},
\end{align}
since $\left[ J,J\right] =J\neq 0$. From the above relations, we can write the multiplication table
\begin{equation}
\begin{array}{c|cccc}
& \lambda _{a} & \lambda _{b} & \lambda _{c} & \lambda _{0_S} \\ 
\hline
\lambda _{a} & \lambda _{a} & \lambda _{b} & \lambda _{c} & \lambda _{0_S} \\ 
\lambda _{b} & \lambda _{b} & \lambda _{0_S} & \lambda _{0_S} & \lambda _{0_S} \\ 
\lambda _{c} & \lambda _{c} & \lambda _{0_S} & \lambda _{b} & \lambda _{0_S} \\ 
\lambda _{0_S} & \lambda _{0_S} & \lambda _{0_S} & \lambda _{0_S} & \lambda _{0_S}
\end{array}
\label{tablaMaxwellv2general}
\end{equation}
in which we can see that all the degeneracy has been fixed.
Then, after having performed the identification 
\begin{equation}
\lambda _{a}\leftrightarrow  \lambda _{0}, \;\;\; \lambda _{b}\leftrightarrow  \lambda _{2}, \;\;\;
\lambda _{c}\leftrightarrow  \lambda _{1}, \;\;\; \lambda _{0_S}\leftrightarrow  \lambda _{2},
\end{equation}
we can write the following multiplication table (where the elements of the set $\tilde{S}$ are written in the usual order): 
\begin{equation}
\begin{array}{c|cccc}
& \lambda _{0} & \lambda _{1} & \lambda _{2} & \lambda _{3} \\ 
\hline
\lambda _{0} & \lambda _{0} & \lambda _{1} & \lambda _{2} & \lambda _{3} \\ 
\lambda _{1} & \lambda _{1} & \lambda _{2} & \lambda _{3} & \lambda _{3} \\ 
\lambda _{2} & \lambda _{2} & \lambda _{3} & \lambda _{3} & \lambda _{3} \\ 
\lambda _{3} & \lambda _{3} & \lambda _{3} & \lambda _{3} & \lambda _{3}
\end{array}
\label{tablaMaxwellv2}
\end{equation}
Table (\ref{tablaMaxwellv2}) represents an abelian, commutative and associative semigroup, named $S^{(2)}_E$. As said in \cite{Concha1, Salgado, Concha2}, $S^{(2)}_E$ is
the semigroup involved in the $S$-expansion ($0_{S}$-resonant-reduction)
procedure performed in order to reach the Maxwell algebra $\mathcal{M}$
starting from the $AdS$ Lie algebra, and we have reproduced this result with our analytic method.

\section{Detailed calculations for reaching the hidden superalgebra underlying $D=11$ supergravity, starting from the supersymmetric Lie algebra $osp(32/1)$}\label{osp}

In the following, we display the detailed calculations for moving from the supersymmetric Lie algebra $osp(32/1)$ to the hidden superalgebra underlying $D=11$ supergravity, through a $0_S$-resonant-reduction procedure, and we show how to find the set(s) involved in the process, once the partitions over subspaces for both the considered superalgebras have been properly chosen.

The generators of $osp(32/1)$ are
\begin{equation}
\lbrace{P_a,J_{ab},Z_{a_1...a_5},Q_\alpha \rbrace}.
\end{equation}
The commutations relation between these generators can be written as  
\begin{eqnarray}
\left[ P_{a},P_{b}\right] &=&J_{ab} , \notag \\
\left[ J^{ab},P_{c}\right] &=&\delta _{ec}^{ab}P^{e},  \notag \\
\left[ J^{ab},J_{cd}\right] &=&\delta _{ecd}^{abf}\; J_{\; f}^{e} , \notag \\
\left[ P_{a},Z_{b_{1}\cdots b_{5}}\right] &=&-\frac{1}{5!}\epsilon
_{ab_{1}\cdots b_{5}c_{1}\cdots c_{5}}Z^{c_{1}\cdots c_{5}} , \notag \\
\left[ J^{ab},Z_{c_{1}\cdots c_{5}}\right] &=&\frac{1}{4!}\delta
_{dc_{1}\cdots c_{5}}^{abe_{1}\cdots e_{4}}Z_{\; e_{1}\cdots e_{4}}^{d},  \notag
\\
\left[ Z^{a_{1}\cdots a_{5}},Z_{b_{1}\cdots b_{5}}\right] &=&\eta ^{\left[
a_{1}\cdots a_{5}\right] \left[ c_{1}\cdots c_{5}\right] }\epsilon
_{c_{1}\cdots c_{5}b_{1}\cdots b_{5}e}P^{e}+\delta _{db_{1}\cdots
b_{5}}^{a_{1}\cdots a_{5}e}J_{\; e}^{d} + \notag \\
&&-\frac{1}{3!3!5!}\epsilon _{c_{1}\cdots c_{11}}\delta
_{d_{1}d_{2}d_{3}b_{1}\cdots b_{5}}^{a_{1}\cdots a_{5}c_{4}c_{5}c_{6}}\eta ^{%
\left[ c_{1}c_{2}c_{3}\right] \left[ d_{1}d_{2}d_{3}\right] }Z^{c_{7}\cdots
c_{11}} , \notag \\
\left[ P_{a},Q\right] &=&-\frac{1}{2}\Gamma _{a}Q , \notag \\
\left[ J_{ab},Q\right] &=&-\frac{1}{2}\Gamma _{ab}Q  ,\notag \\
\left[ Z_{abcde},Q\right] &=&-\frac{1}{2}\Gamma _{abcde}Q , \notag \\
\left\{ Q^{\rho },Q^{\sigma }\right\} &=&-\frac{1}{2^{3}}\left[ \left(
\Gamma ^{a}C^{-1}\right) ^{\rho \sigma }P_{a}-\frac{1}{2}\left( \Gamma
^{ab}C^{-1}\right) ^{\rho \sigma }J_{ab}\right] + \notag \\
&&-\frac{1}{2^{3}}\left[\frac{1}{5!}\left( \Gamma ^{abcde}C^{-1}\right) ^{\rho \sigma }Z_{abcde}\right] ,
\end{eqnarray}
where $C_{\rho \sigma}$ is the charge conjugation matrix and $\Gamma_a$, $\Gamma_{ab}$, $\Gamma_{abcde}$ are the Dirac matrices in eleven dimensions.
Let us perform the following subspaces partition for the $osp(32/1)$ algebra:
\begin{eqnarray}
\left[ V_{0},V_{0}\right] &\subset &V_{0} ,\\
\left[ V_{0},V_{1}\right] &\subset &V_{1} ,\\
\left[ V_{0},V_{2}\right] &\subset &V_{2} ,\\
\left[ V_{1},V_{1}\right] &\subset &V_{0}\oplus V_{2}, \\
\left[ V_{1},V_{2}\right] &\subset &V_{1} ,\\
\left[ V_{2},V_{2}\right] &\subset &V_{0}\oplus V_{2},
\end{eqnarray}
where we have set $V_0= \lbrace J_{ab} \rbrace$, $V_1=\lbrace Q_\alpha \rbrace$, and $V_2=\lbrace P_a, Z_{a_1...a_5} \rbrace$. Thus, the dimensions of the internal decomposition of $osp(32/1)$ read
\begin{eqnarray}
\dim \left( V_{0}\right) &=&\underset{J_{ab}}{\underbrace{55}} ,\\
\dim \left( V_{1}\right) &=&\underset{Q_{\alpha }}{\underbrace{32}}, \\
\dim \left( V_{2}\right) &=&\underset{P_{a}}{\underbrace{11}}+\underset{%
Z_{a_1 \cdots a_5}}{\underbrace{462}}=473.
\end{eqnarray}
The generators of the superalgebra underlying $D=11$ supergravity are given by the set
\begin{equation}
\lbrace \tilde{P}_a, \tilde{J}_{ab}, \tilde{Z}_{ab},\tilde{Z}_{a_1...a_5},\tilde{Q}_\alpha, \tilde{Q}'_\alpha \rbrace .
\end{equation}
These generators satisfy the following commutation relations:
\begin{eqnarray}
\lbrace \tilde{Q}, \tilde{\bar Q} \rbrace &=& -\left(\ii \Gamma^a \tilde{P}_a + \frac 12 \Gamma^{ab}\tilde{Z}_{ab}+ \frac {\ii}{5!} \Gamma^{a_1...a_5}\tilde{Z}_{a_1...a_5}\right)\,, \label{qq11}\\ \nonumber
\lbrace \tilde{Q}',\tilde{\bar Q}' \rbrace &=& 0\,,\\ \nonumber
\lbrace \tilde{Q},\tilde{\bar Q}' \rbrace &=& 0\,,\\ \nonumber
\left[\tilde{Q}, \tilde{P}_a\right] &=& -2 \ii \begin{pmatrix}
5 \\ 
0
\end{pmatrix}  \Gamma_a \tilde{Q}'\,,\\ \label{degpiqu} \nonumber
\left[\tilde{Q}, \tilde{Z}^{ab}\right] &=&-4  \Gamma^{ab}\tilde{Q}' \,, \\ \nonumber
\left[\tilde{Q}, \tilde{Z}^{a_1...a_5}\right] &=&- 2 \,(5!) \ii \begin{pmatrix}
\frac{1}{48} \\ 
\frac{1}{72}
\end{pmatrix}  \Gamma^{a_1...a_5}\tilde{Q}'\,, \\ \nonumber
\left[\tilde{J}_{ab}, \tilde{Z}^{cd}\right]&=&-8 \delta^{[c}_{[a}\tilde{Z}_{b]}^{\ d]}\,,\\ \nonumber
\left[\tilde{J}_{ab}, \tilde{Z}^{c_1\dots c_5}\right]&=&- 20 \delta^{[c_1}_{[a}\tilde{Z}^{c_2\dots c_5]}_{b]}\,,\\ \nonumber
\left[\tilde{J}_{ab}, \tilde{Q}\right]&=&- \Gamma_{ab} \tilde{Q}\,,\\ \nonumber
\left[\tilde{J}_{ab}, \tilde{Q}'\right]&=&- \Gamma_{ab} \tilde{Q}'\,, \\ \nonumber
\left[\tilde{P}_a , \tilde{Q}'\right] &=& \left[\tilde{Z}_{ab} , \tilde{Q}'\right]  = \left[\tilde{Z}_{a_1...a_5} , \tilde{Q}'\right] =
\left[\tilde{P}_a , \tilde{P}_b\right] =0, \\ \nonumber 
\left[ \tilde{J}^{ab},\tilde{P}_{c}\right] &=&\delta _{ec}^{ab}\tilde{P}^{e},  \\ \nonumber
\left[ \tilde{J}^{ab},\tilde{J}_{cd}\right] &=&\delta _{ecd}^{abf}\; \tilde{J}_{\; f}^{e} ,  \\ \nonumber
\left[ \tilde{Z}_{ab},\tilde{Z}_{bc}\right] &=& \left[ \tilde{Z}_{ab},\tilde{Z}_{a_1...a_5}\right] = \left[ \tilde{Z}_{ab},\tilde{P}_c \right] =\left[\tilde{P}_a, \tilde{Z}_{a_1...a_5}\right]=\left[\tilde{Z}_{a_1...a_5}, \tilde{Z}_{b_1...b_5}\right]=0,
\end{eqnarray}
where the free parameter $E_2$ appearing in Ref. \cite{Ravera} has been consistently fixed to the value $1$ (this is due to the possibility of fixing the normalization of the differential form associated with the extra fermionic generator $\tilde{Q}'$).

We observe that the above algebra actually describes two superalgebras, due to the degeneracy appearing in the commutation relation (\ref{degpiqu}), from which we clearly see that the generators $\tilde{Q}$ and $\tilde{P}_a$ can also commute. In the following, we will discuss the $S$-expansion, $0_S$-resonant-reduced procedure for both these superalgebras. 

Let us also observe that, in the description of the hidden superalgebra, the coefficients are written following the notation and conventions presented in Ref.s \cite{D'Auria, Ravera}, while, when considering the supersymmetric $osp(32/1)$ Lie algebra, we have adopted the notation presented in Ref. \cite{Iza1}. However, the coefficients appearing in the mentioned algebras are not relevant to our discussion, since we just need to know the \textit{structure} of the algebras for applying our analytic method. 

We can thus proceed, giving the internal decomposition of the target superalgebra (we first consider the case in which $\left[\tilde{Q}, \tilde{P}_a\right]\neq 0$). For the target superalgebra underlying $D=11$ supergravity, we can write 
\begin{eqnarray}
\dim \left( \tilde{V}_{0}\right) &=&\underset{\tilde{J}_{ab},\; \tilde{Z}_{ab}}{\underbrace{110}} ,\\
\dim \left( \tilde{V}_{1}\right) &=&\underset{\tilde{Q}_{\alpha },\; \tilde{Q}'_{\alpha }}{\underbrace{64}}, \\
\dim \left( \tilde{V}_{2}\right) &=&\underset{\tilde{P}_{a}}{\underbrace{11}}+\underset{%
\tilde{Z}_{a_1 \cdots a_5}}{\underbrace{462}}=473.
\end{eqnarray}
where we have clearly set $\tilde{V}_0=\lbrace 0\rbrace \cup \lbrace \tilde{J}_{ab},\tilde{Z}_{ab}\rbrace$, $\tilde{V}_1=\lbrace \tilde{Q},\tilde{Q}'\rbrace$, and $\tilde{V}_2=\lbrace \tilde{P}_a, \tilde{Z}_{a_1...a_5} \rbrace$. 
The subspaces partition for the target superalgebra satisfies the following relations
\begin{eqnarray}
\left[ \tilde{V}_{0},\tilde{V}_{0}\right] &\subset &\tilde{V}_{0} ,\\
\left[ \tilde{V}_{0},\tilde{V}_{1}\right] &\subset &\tilde{V}_0\oplus \tilde{V}_{1} ,\\
\left[ \tilde{V}_{0},\tilde{V}_{2}\right] &\subset &\tilde{V}_0\oplus \tilde{V}_{2} ,\\
\left[ \tilde{V}_{1},\tilde{V}_{1}\right] &\subset &\tilde{V}_{0}\oplus \tilde{V}_{2}, \\
\left[ \tilde{V}_{1},\tilde{V}_{2}\right] &\subset &\tilde{V}_{0}\oplus \tilde{V}_1 ,\\
\left[ \tilde{V}_{2},\tilde{V}_{2}\right] &\subset &\tilde{V}_{0}\oplus \tilde{V}_{2},
\end{eqnarray}
analogously to what we have done for $osp(32/1)$.
We can now move to the study of the usual system (\ref{system}), which, in this case, reads
\begin{equation}
\left\{
\begin{aligned}
& 110 =55(\tilde{P}-1-\Delta_1 -\Delta_2), \\
& 64 =32(\tilde{P}-1-\Delta_0 - \Delta_2), \\
& 473 =473(\tilde{P}-1-\Delta_0 - \Delta_1), \\
& \tilde{P} = \Delta_0 + \Delta_1 + \Delta_2 +1,
\end{aligned} \right.
\end{equation}
where $\Delta_0$, $\Delta_1$, $\Delta_2$ respectively denote the cardinality of the subsets related to the subspaces $V_0$, $V_1$, and $V_2$.
This system admits the unique solution
\begin{equation}
\tilde{P}=6, \;\;\; \Delta_0 = 2, \;\;\; \Delta_1 = 2, \;\;\; \Delta_2 = 1.
\end{equation}
Thus, we are now able to write the following subset decomposition of the set $\tilde{S}$ involved in the process:
\begin{eqnarray}
S_{2_0} &=& \lbrace \lambda_a, \lambda_b \rbrace , \\
S_{2_1} &=& \lbrace \lambda_c, \lambda_d \rbrace , \\
S_{1_2} &=& \lbrace \lambda_e  \rbrace . 
\end{eqnarray}
Now, we build up the adjoint representation with respect to the subspaces partition of the starting $osp(32/1)$ algebra, namely
\begin{equation}
\begin{aligned}
& \left( C\right) _{0B}^{C} =
\begin{pmatrix}
{(C)}_{00}^{0} & 0 & 0 \\ 
0 & {(C)}_{01}^{1} & 0 \\ 
0 & 0 & {(C)}_{02}^{2}
\end{pmatrix}, \;\;\; 
\left( C\right) _{1B}^{C} =
\begin{pmatrix}
0 & {(C)}_{10}^{1} & 0 \\ 
{(C)}_{11}^{0} & 0 & {(C)}_{11}^{2} \\ 
0 & {(C)}_{12}^{1} & 0
\end{pmatrix}, \nonumber \\
& \left( C\right) _{2B}^{C} =
\begin{pmatrix}
0 & 0 & {(C)}_{20}^{2} \\ 
0 & {(C)}_{21}^{1} & 0 \\ 
{(C)}_{22}^{0} & 0 & {(C)}_{21}^{2}
\end{pmatrix}.
\end{aligned}
\end{equation}
Thus, one can now write the usual relations (\ref{eqsemigrold}) (or their simpler form, given by (\ref{eqsemigr})) for the case under analysis, and find the following product structure for the subset decomposition of the set $\tilde{S}$
\begin{eqnarray}
S_{2_0} \cdot S_{2_0} &\subset& S_{2_0} \cup \lbrace \lambda_{0_S}\rbrace , \\
S_{2_0} \cdot S_{2_1}  &\subset& S_{2_1} \cup \lbrace \lambda_{0_S}\rbrace , \\
S_{2_0} \cdot S_{1_2}  &\subset& S_{1_2} \cup \lbrace \lambda_{0_S}\rbrace , \\
S_{2_1} \cdot S_{2_1} &\subset& \left( S_{2_0} \cap S_{1_2} \right) \cup \lbrace \lambda_{0_S}\rbrace , \\
S_{2_1} \cdot S_{1_2} &\subset& S_{1_2} \cup \lbrace \lambda_{0_S}\rbrace  , \\
S_{1_2} \cdot S_{1_2} &\subset& \left( S_{2_0} \cap S_{1_2} \right) \cup \lbrace \lambda_{0_S}\rbrace , 
\end{eqnarray}
where we have explicitly taken into account the presence of the zero element $\lambda_{0_S}$. This allows to reach the multiplication rules
\begin{eqnarray}
\lambda_{a,b} \lambda_{a,b} &=& \lambda_{a,b,0_S}, \\
\lambda_{a,b} \lambda_{c,d} &=& \lambda_{c,d,0_S}, \\
\lambda_{a,b} \lambda_{e} &=& \lambda_{e,0_S}, \\
\lambda_{c,d} \lambda_{c,d} &=& \lambda_{a,b,e,0_S}, \\
\lambda_{c,d} \lambda_{e} &=& \lambda_{e,0_S}, \\
\lambda_{e} \lambda_{e} &=& \lambda_{a,b,e,0_S},
\end{eqnarray}
where we have already taken into account the triviality of the multiplications rules
\begin{align}
\lambda_{0_S}\lambda_{0_S}= &\lambda_{0_S}, \\
\lambda_{0_S}\lambda_{a,b,c,d,e}= &\lambda_{0_S}. 
\end{align}

We can now fix the degeneracy appearing in the above multiplication rules, by analyzing the information coming from the target superalgebra. 
According to the usual $S$-expansion procedure (see Ref. \cite{Iza1}), we have to write the commutation relations between the generators of the target superalgebra in terms of the commutation relations between the generators of the $S$-expanded $osp(32/1)$. 
After having performed the identification (according to the identification criterion presented in Subsection \ref{IdCr})
\begin{equation}
\lambda_a J_{ab} = \tilde{J}_{ab}, \;\;\; \lambda_b J_{ab} = \tilde{Z}_{ab}, \;\;\; \lambda_c Q = \tilde{Q}, \;\;\; \lambda_d Q = \tilde{Q}', \;\;\; \lambda_e P_a = \tilde{P}_a, \;\;\; \lambda_e Z_{a_1...a_5} = \tilde{Z}_{a_1...a_5},
\end{equation}
we are able to write the commutation relations of the superalgebra underlying $D=11$ supergravity in terms of the commutation relations of the $S$-expanded generators of $osp(32/1)$. 
In the following, we will just consider the structure of the commutation relations, since the explicit values of the coefficients are not relevant to our analysis. 
For performing this calculation, we consider the case in which $\left[\tilde{Q}, \tilde{P}_a\right]\neq 0$. 

Thus, taking into account the commutation relations for the initial algebra $osp(32/1)$, we have:
\begin{eqnarray}
& \lbrace \tilde{Q}', \tilde{Q}' \rbrace = \lbrace\lambda_d Q, \lambda_d Q \rbrace=\lambda_d \lambda_d \lbrace Q,Q \rbrace = 0 \;\;\; \rightarrow \;\;\; \lambda_d \lambda_d = \lambda_{0_S} , \\
& \lbrace \tilde{Q}, \tilde{Q}' \rbrace = \lbrace\lambda_c Q, \lambda_d Q \rbrace=\lambda_c \lambda_d \lbrace Q,Q \rbrace = 0 \;\;\; \rightarrow \;\;\; \lambda_c \lambda_d = \lambda_{0_S} , \\
& \left[\tilde{P}_a, \tilde{P}_b \right] = \left[\lambda_e P_a, \lambda_e P_b \right]=\lambda_e \lambda_e \left[P_a,P_b \right]= 0 \;\;\; \rightarrow \;\;\; \lambda_e \lambda_e = \lambda_{0_S} , \\
& \left[\tilde{P}_a, \tilde{J}_{bc} \right] = \left[\lambda_e P_a, \lambda_a J_{bc} \right]= \lambda_e \lambda_a \left[P_a,J_{bc} \right]\propto \lambda_e \delta _{ec}^{ab}P^{e} \;\;\; \rightarrow \;\;\; \lambda_e \lambda_a = \lambda_e , \\
& \left[\tilde{J}_{ab}, \tilde{J}_{cd} \right] = \left[\lambda_a J_{ab}, \lambda_a J_{cd} \right]= \lambda_a \lambda_a \left[J_{ab},J_{cd} \right]\propto \lambda_a \delta _{ecd}^{abf}\; J_{\; f}^{e} \;\;\; \rightarrow \;\;\; \lambda_a  \lambda_a = \lambda_a , \\
& \left[\tilde{P}_a, \tilde{Q}' \right] = \left[\lambda_e P_a, \lambda_d Q \right]=\lambda_e \lambda_d \left[P_a,Q \right]= 0 \;\;\; \rightarrow \;\;\; \lambda_e \lambda_d = \lambda_{0_S} , \\
& \left[\tilde{Z}_{ab}, \tilde{Q}' \right] = \left[\lambda_b J_{ab}, \lambda_d Q \right]=\lambda_b \lambda_d \left[J_{ab},Q \right]= 0 \;\;\; \rightarrow \;\;\; \lambda_b \lambda_d = \lambda_{0_S} , \\
& \left[\tilde{P}_a, \tilde{Q} \right] = \left[\lambda_e P_a, \lambda_c Q \right]=\lambda_e \lambda_c \left[P_a,Q \right]\propto \lambda_d Q \;\;\; \rightarrow \;\;\; \lambda_e \lambda_c = \lambda_d , \\
& \left[\tilde{Z}_{ab}, \tilde{Q} \right] = \left[\lambda_b J_{ab}, \lambda_c Q \right]=\lambda_b \lambda_c \left[J_{ab},Q \right]\propto \lambda_d Q \;\;\; \rightarrow \;\;\; \lambda_b \lambda_c = \lambda_d , \\
& \left[\tilde{J}_{ab}, \tilde{Z}_{cd} \right] = \left[\lambda_a J_{ab}, \lambda_b J_{cd} \right]= \lambda_a \lambda_b \left[J_{ab},Z_{cd} \right]\propto \lambda_b \delta _{ecd}^{abf}\; Z_{\; f}^{e} \;\;\; \rightarrow \;\;\; \lambda_a \lambda_b = \lambda_b , \\
& \left[\tilde{J}_{ab}, \tilde{Q} \right] = \left[\lambda_a J_{ab}, \lambda_c Q \right]=\lambda_a \lambda_c \left[J_{ab},Q \right]\propto \lambda_c Q \;\;\; \rightarrow \;\;\; \lambda_a \lambda_c = \lambda_c , \\
& \left[\tilde{J}_{ab}, \tilde{Q}' \right] = \left[\lambda_a J_{ab}, \lambda_d Q \right]=\lambda_a \lambda_d \left[J_{ab},Q \right]\propto \lambda_d Q \;\;\; \rightarrow \;\;\; \lambda_a \lambda_d = \lambda_d , \\
& \left[\tilde{Z}_{ab}, \tilde{Z}_{cd} \right] = \left[\lambda_b J_{ab}, \lambda_b J_{cd} \right]= \lambda_b \lambda_b \left[J_{ab},Z_{cd} \right]=0 \;\;\; \rightarrow \;\;\; \lambda_b \lambda_b = \lambda_{0_S} , \\
& \left[\tilde{Z}_{ab}, \tilde{P}_{c} \right] = \left[\lambda_b J_{ab}, \lambda_e P_{c} \right]= \lambda_b \lambda_e \left[J_{ab},P_{c} \right]=0 \;\;\; \rightarrow \;\;\; \lambda_b \lambda_e = \lambda_{0_S} , \\
& \lbrace \tilde{Q}, \tilde{Q} \rbrace = \lbrace\lambda_c Q, \lambda_c Q \rbrace=\lambda_c \lambda_c \lbrace Q,Q \rbrace \propto \lambda_e P_a + \lambda_b J_{ab}+\lambda_e Z_{abcde} \; \rightarrow  \label{lastcomm} \\
& \rightarrow \; \lambda_c \lambda_c = \lambda_b , \; \; \text{where we have set} \; \lambda_b=\lambda_e,  \nonumber
\end{eqnarray}
and the other commutation relations give us results that agree with the above ones. 

We observe that, in equation (\ref{lastcomm}), we must set 
\begin{equation}\label{equal}
\lambda_b=\lambda_e ,
\end{equation}
in order to get consistent relations. 
For performing this identification with consistency, we have also exploited the statement which follows from Theorem \ref{T}.

This procedure fixes the degeneracy of the multiplication rules between the elements of the subsets of $\tilde{S}$, and we are now able to write the following multiplication table
\begin{equation}
\begin{array}{c|ccccc}
& \lambda _{a} & \lambda _{b} & \lambda _{c} & \lambda _{d} & \lambda_{0_S} \\ 
\hline
\lambda _{a} & \lambda _{a} & \lambda _{b} & \lambda _{c} & \lambda _{d} & \lambda_{0_S}
\\ 
\lambda _{b} & \lambda _{b} & \lambda _{0_S} & \lambda _{d} & \lambda _{0_S} & \lambda_{0_S}
\\ 
\lambda _{c} & \lambda _{c} & \lambda _{d} & \lambda _{b} & \lambda _{0_S} & \lambda_{0_S}
\\ 
\lambda _{d} & \lambda _{d} & \lambda _{0_S} & \lambda _{0_S} & \lambda _{0_S} & \lambda_{0_S}
\\
\lambda _{0_S} & \lambda _{0_S} & \lambda _{0_S} & \lambda _{0_S} & \lambda _{0_S} & \lambda_{0_S}
\end{array}
\end{equation}
Then, after having performed the identification
\begin{equation}
a \leftrightarrow  0, \;\;\; b \leftrightarrow  2, \;\;\; c \leftrightarrow  1, \;\;\;
d \leftrightarrow  3, \;\;\; 0_S \leftrightarrow  4,
\end{equation} 
we can finally rewrite the table above as follows (in the usual order):
\begin{equation}
\begin{array}{c|ccccc}
& \lambda _{0} & \lambda _{1} & \lambda _{2} & \lambda _{3} & \lambda_{4} \\ 
\hline
\lambda _{0} & \lambda _{0} & \lambda _{1} & \lambda _{2} & \lambda _{3} & \lambda_{4}
\\ 
\lambda _{1} & \lambda _{1} & \lambda _{2} & \lambda _{3} & \lambda _{4} & \lambda_{4}
\\ 
\lambda _{2} & \lambda _{2} & \lambda _{3} & \lambda _{4} & \lambda _{4} & \lambda_{4}
\\ 
\lambda _{3} & \lambda _{3} & \lambda _{4} & \lambda _{4} & \lambda _{4} & \lambda_{4}
\\
\lambda _{4} & \lambda _{4} & \lambda _{4} & \lambda _{4} & \lambda _{4} & \lambda_{4}
\end{array}
\end{equation}
This is exactly the multiplication table of the semigroup $S^{(3)}_E$, which leads, as it was shown in Ref. \cite{Iza1}) through a $S$-expansion procedure ($0_S$-resonant-reduction), from the $osp(32/1)$ algebra to the hidden superalgebra underlying the eleven-dimensional supergravity theory, described in Ref.s \cite{D'Auria, Ravera}.

\end{document}